\documentclass[prd,aps,amsfonts,showpacs,longbibliography,notitlepage,twocolumn,groupedaddress,superscriptaddress,nofootinbib,10pt]{revtex4-2}

\usepackage[utf8]{inputenc}

\usepackage{tikz} 
\usepackage{amsmath,amsfonts,amsthm, mathtools}
\usepackage{placeins}
\usepackage{float}
\usepackage{dsfont}
\usepackage{csquotes}
\usepackage{amssymb}
\usepackage{verbatim}
\usepackage{bm}
\usepackage{amsmath}
\usepackage{amssymb}
\usepackage{stmaryrd}
\usepackage{amsthm}
\usepackage[caption=false]{subfig}
\usepackage{physics}
\usepackage{bm}  

\usepackage{hyperref}
\definecolor{darkred}  {rgb}{0.5,0,0}
\definecolor{darkblue} {rgb}{0,0,0.5}
\definecolor{darkgreen}{rgb}{0,0.5,0}
\hypersetup{
  colorlinks = true,
  urlcolor  = blue,         
  linkcolor = darkblue,     
  citecolor = darkgreen,    
  filecolor = darkred       
}

\theoremstyle{definition}

\newtheorem{corollary}{Corollary}
\newtheorem{definition}{Definition}

\newtheorem{lemma}{Lemma}
\newtheorem{proposition}{Proposition}
\newtheorem{thm}{Theorem}

\newcommand{\mbb}{\mathbb}
\newcommand{\mc}{\mathcal}

\newcommand{\wt}{\widetilde}

\usepackage{multirow}

\usepackage{xparse}

\DeclareDocumentCommand{\norm}{m}{\lVert #1\rVert}

\newcommand{\x}{\mathbf{x}}

\definecolor{cool_green}{rgb}{0.0, 0.5, 0.0}

\definecolor{gold}{rgb}{0.85, 0.65, 0.13}
\usepackage[dvipsnames]{xcolor}

\begin{document}

\title{Private Correlations Certify Sensing Capability}
\author{Yunkai Wang}
\email{ywang10@perimeterinstitute.ca}
\affiliation{Perimeter Institute for Theoretical Physics, Waterloo, Ontario N2L 2Y5, Canada.}
\affiliation{Department of Applied Mathematics, University of Waterloo, Ontario N2L 3G1, Canada.}
\affiliation{Institute for Quantum Computing, University of Waterloo, Ontario N2L 3G1, Canada.}
\affiliation{Department of Physics and Astronomy, University of Waterloo, Ontario N2L 3G1, Canada.}

\author{Peixue Wu}
\affiliation{Department of Applied Mathematics, University of Waterloo, Ontario N2L 3G1, Canada.}
\affiliation{Institute for Quantum Computing, University of Waterloo, Ontario N2L 3G1, Canada.}

\author{Graeme Smith}
\affiliation{Department of Applied Mathematics, University of Waterloo, Ontario N2L 3G1, Canada.}
\affiliation{Institute for Quantum Computing, University of Waterloo, Ontario N2L 3G1, Canada.}

\author{Sisi Zhou}
\affiliation{Perimeter Institute for Theoretical Physics, Waterloo, Ontario N2L 2Y5, Canada.}
\affiliation{Department of Applied Mathematics, University of Waterloo, Ontario N2L 3G1, Canada.}
\affiliation{Institute for Quantum Computing, University of Waterloo, Ontario N2L 3G1, Canada.}
\affiliation{Department of Physics and Astronomy, University of Waterloo, Ontario N2L 3G1, Canada.}

\begin{abstract}
We show that private correlations in a bipartite quantum state constitute a metrological resource for distributed sensing assisted by a possibly noisy channel from one party to the other. We begin with an example with distillable secret key but poor locally accessible sensing performance and show that an assisting channel substantially improves its performance. We then prove a general theorem showing that positive private information in the encoding basis certifies a quantitative lower bound on the locally accessible sensing capability after assistance. 
We further show that classical correlations alone provide no analogous guarantee, whereas, in the absence of the assisting subsystem, the privacy-based guarantee reduces to an entanglement-based one.
Finally, in a channel formulation, we show that a channel with positive private information allows nonzero locally accessible sensitivity when paired with a suitable assisting channel.
\end{abstract}



\maketitle


\textit{Introduction} -- Quantum sensing exploits fundamentally quantum strategies to estimate the parameter dependence of quantum states to achieve sensitivities beyond those attainable with classical resources~\cite{BraunsteinCaves1994,giovannetti2004quantum,giovannetti2006quantum,paris2009quantum,GiovannettiLloydMaccone2011,degen2017quantum,pezze2018quantum,braun2018quantum,pirandola2018advances}.
However, noise can degrade the sensing state's dependence on the unknown parameter. 
General frameworks have been developed that give upper bounds on the achievable precision under noisy conditions~\cite{EscherDavidovich2011,DemkowiczDobrzanski2012}.
Here we pursue an important complementary problem: can an operational information-theoretic quantity of a noisy quantum state certify a nontrivial lower bound on its sensing capability? Drawing on insights from quantum Shannon theory, we identify one such certificate, offering a new perspective on translating the operational properties of noisy quantum states into rigorous metrological guarantees.

Quantum Shannon theory provides a natural hierarchy of operational quantities that characterize which informational resources survive noise \cite{devetak2005private,devetak2005distillation,horodecki2009general}.
As the noise becomes stronger, distillable entanglement may vanish while private correlations persist, allowing two parties to distill a shared secret key \cite{horodecki2005secure}. 
This demonstrates that privacy occupies a weaker, and therefore more noise-robust, level of the hierarchy than entanglement. 
Distillable entanglement and distillable key capture quantitatively these two distinct resources. 
Entanglement has long been regarded as a central resource for quantum sensing, and previous work has established that even weak forms of entanglement, including PPT bound entanglement, can provide a metrological advantage over separable probes \cite{czekaj2015quantum,toth2018quantum,toth2020activating,pal2021bound}. These results considerably broaden the classes of entangled states known to be useful for sensing, while continuing to interpret their metrological advantage primarily through the lens of entanglement. More generally, every resourceful state is advantageous in a suitably constructed estimation task \cite{tan2021fisher}, but this result does not provide a nontrivial lower bound for a preassigned sensing model.
It remains unknown whether private correlations can explicitly certify metrological capability; prior work on privacy in sensing instead addresses
protocol security
\cite{hassani2025privacy,namkung2026universal,alushi2026privacy,
de2025anonymous,bugalho2025private,wang2025secure}.


%

In this work, we establish that private correlations provide a quantitative certificate of distributed sensing capability when supplemented by an assisting channel. We first demonstrate this mechanism using a family of private states constructed from Werner data-hiding states \cite{terhal2001hiding,divincenzo2002quantum,horodecki2005secure,matthews2009chernoff,matthews2009distinguishability,christandl2017private}. Without the assisting channel, the local operations and classical communication (LOCC) Fisher information is suppressed inversely with the dimension of the system. After Alice transmits her  parameter-independent subsystem to Bob, the LOCC Fisher information is much improved and becomes independent of the dimension of the  system.
Remarkably, this recovery persists when the assisting channel is an erasure or depolarizing channel with any fixed nonzero transmission strength.


We then establish a general relation between privacy and sensing. Consider a general state shared by Alice and Bob, with the parameter encoded on one of Alice's subsystems. If measuring this subsystem yields positive private information between Alice and Bob, then transmitting Alice's remaining unencoded subsystem through an assisting channel which can be noisy guarantees a strictly positive quantitative lower bound on the LOCC Fisher information. Essentially, the assisting channel does not carry information about the unknown parameter.  We also show that classical correlations provide no analogous guarantee, whereas when Alice has no remaining unencoded subsystem, the privacy-based guarantee reduces to an entanglement-based one. We further develop a channel-based sensing model, showing that positive private information of the channel likewise certifies LOCC-accessible sensing capability when the purifying subsystem of the input ensemble is transmitted through an assisting channel. 

Our results thus identify a broad class of states that can be useful for quantum sensing based on their information-theoretic properties, providing general guidance for the study of noisy quantum sensing. More generally, we connect privacy properties with guaranteed metrological performance, establishing a bridge between the resources that survive noise and the sensing capabilities they can support. This connection motivates further investigation into whether the well-developed tools and novel phenomena of quantum communication can yield new insights and protocols for quantum sensing.

\begin{figure}[!tb]
\begin{center}
\includegraphics[width=0.68\columnwidth]{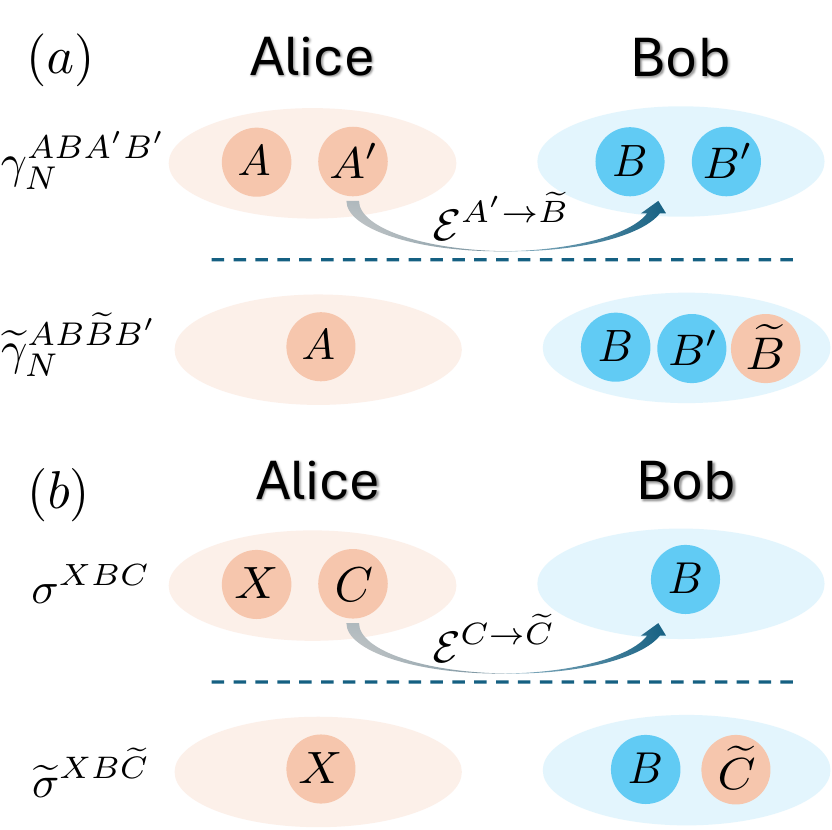}
\caption{Three formulations in which private correlations certify distributed sensing capability: (a) the data-hiding example, (b) the general-state formulation. 
}
\label{fig:Set_up}
\end{center}
\end{figure}

\textit{Example constructed with data hiding states}\label{Sec:data_hiding_distributed} -- We begin with an example showing that a state containing private correlations can exhibit a substantial enhancement in its locally accessible metrological capability when supplemented by an assisting channel. As illustrated in Fig.~\ref{fig:Set_up}(a), consider the following family of states constructed from Werner data-hiding states \cite{terhal2001hiding,divincenzo2002quantum,horodecki2005secure,matthews2009chernoff,matthews2009distinguishability,christandl2017private}:
\begin{equation}
\begin{aligned}
\label{eq:general-hidden-sensing-state}
    &\gamma_N^{ABA'B'}(\theta)
    =
    \frac12\,\psi_{N+}^{AB}(\theta)\otimes \rho_+^{A'B'}
    +
    \frac12\,\psi_{N-}^{AB}(\theta)\otimes \rho_-^{A'B'},\\
    &\rho_+^{A'B'}=\frac{I+S}{d(d+1)},
    \quad
    \rho_-^{A'B'}=\frac{I-S}{d(d-1)} ,   
\end{aligned}
\end{equation}
where $\ket{\psi_{N\pm}(\theta)}=\frac{1}{\sqrt2}\bigl(\ket{N0}\pm e^{iN\theta}\ket{0N}\bigr)$ and $\psi_{N\pm}^{AB}(\theta)=\ket{\psi_{N\pm}(\theta)}\bra{\psi_{N\pm}(\theta)}^{AB}$, $I$ is the identity matrix, $S$ is the swap operator. The states $\rho_+^{A'B'}$ and $\rho_-^{A'B'}$ are the Werner data-hiding states on $A'B'$. Initially, Alice holds the systems $AA'$, while Bob holds $BB'$. Alice may send the system $A'$ to Bob through an assisting channel $\mc E^{A'\to \wt B}$, and we define
\begin{equation}\label{eq:general-noisy-hidden-sensing-state}
    \wt\gamma_N^{AB\wt B B'}(\theta)
    :=(\mc E^{A'\to \wt B}\otimes \mathrm{id}^{ABB'})(\gamma_N^{AB A' B'}(\theta)).
\end{equation}
Thus, $\gamma_N(\theta)$ and $\wt\gamma_N(\theta)$ denote the sensing states before and after Alice transmits her unencoded subsystem through $\mc E$, respectively.

Eq.~\eqref{eq:general-hidden-sensing-state} contains one private bit: measuring $AB$ in the basis ${\lvert N0\rangle,\lvert 0N\rangle}$ yields a perfectly correlated bit independent of the purifying environment. This state is inspired by the construction that rigorously separated privacy from distillable entanglement~\cite{horodecki2005secure} and enabled the discovery of superactivation~\cite{smith2008quantum}. The data-hiding states $\rho_\pm^{A'B'}$ are perfectly distinguishable globally, whereas their LOCC distinguishability vanishes as $d\to\infty$. The phase-encoded states $\psi_{N\pm}^{AB}$ differ by a phase flip whose label is stored in $\rho_\pm^{A'B'}$; inferring this label therefore requires distinguishing  $\rho_\pm^{A'B'}$, which LOCC cannot do reliably for large $d$. After Alice sends $A'$ through the assisting channel, $A$ and $B$ remain separated, but Bob can jointly measure the received subsystem $A'$ and $B'$. Even if the assisting channel is noisy, the resulting shield states can retain sufficient distinguishability for Bob to infer the label and condition the subsequent LOCC sensing measurement. We therefore expect assistance to improve locally accessible sensing, although it carries neither the encoded system nor the parameter. The following theorem confirms this intuition:


\begin{thm}\label{main_thm:hiding-recovery}
For LOCC measurement on the state $\gamma_N$ before using the assisting channel, Fisher information is upper bounded
\begin{equation}
    F_{\mathrm{LOCC}}^{(AA':BB')}(\gamma_N(\theta))
    \le
    \frac{4N^2}{d+1}.
\end{equation}
Use the erasure channel  $\mc E_\lambda(X)=(1-\lambda)X+\lambda\,\Tr(X)\ket e\bra e$ as the assisting channel, 
\begin{equation}
    F_{\mathrm{LOCC}}^{(A:\wt B B B')}(\wt \gamma_N(\theta))
    \ge
    (1-\lambda)N^2.
\end{equation}
Use the depolarizing channel $\mc D_p(X)=(1-p)X+\frac{p}{d}\Tr(X)\,I$ as assisting channel, 
\begin{equation}
    F_{\mathrm{LOCC}}^{(A:\wt B B B')}(\wt \gamma_N(\theta))
    \ge
    \frac{(1-p)^2}{1-p^2/d^2}\,N^2.
\end{equation}
\end{thm}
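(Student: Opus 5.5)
The plan is to handle the upper bound by relaxing LOCC to separable (PPT) measurements and exploiting the data-hiding property of $\rho_\pm^{A'B'}$. The two lower bounds will come from explicit LOCC protocols in which Bob first reads the shield label from $\wt B B'$ and Alice and Bob then run a NOON-type parity measurement on $AB$.

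\textbf{Upper bound.} Write $\psi_{N\pm}(\theta)=D\pm C(\theta)$, where $D=\tfrac12(\lvert N0\rangle\langle N0\rvert+\lvert 0N\rangle\langle 0N\rvert)$ is $\theta$-independent and $C(\theta)=\tfrac12(e^{-iN\theta}\lvert N0\rangle\langle 0N\rvert+\mathrm{h.c.})$. This gives
\begin{equation*}
\gamma_N(\theta)=D\otimes\omega_+ + C(\theta)\otimes\omega_-,\qquad \omega_\pm:=\tfrac12(\rho_+\pm\rho_-).
\end{equation*}
Every LOCC POVM element on $AA':BB'$ is separable, so it suffices to bound $(\partial_\theta p)^2/p$ for $p=\Tr[M\gamma_N(\theta)]$ with $M=\sum_k M_k^{AB}\otimes M_k^{A'B'}$ a sum of products, each factor being a product (hence PPT) operator. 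The key data-hiding inequality is
\begin{equation*}
\lvert\Tr[M'\omega_-]\rvert\le \epsilon\,\Tr[M'\omega_+]\quad\text{for PPT } M'\ge 0,
\end{equation*}
with $\epsilon$ of order $1/(d+1)$. It follows from $S^{\Gamma}=d\,\Phi$ together with $M'^{\Gamma}\ge0$, which give $0\le\Tr[M'S]\le\Tr M'$ and hence a bounded ratio of $\Tr M'\rho_+$ to $\Tr M'\rho_-$. Under this inequality each outcome effectively sees the $AB$ state $\tfrac12[(1+\delta_k)\psi_{N+}+(1-\delta_k)\psi_{N-}]$ with a bias $\lvert\delta_k\rvert\le\epsilon$. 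The QFI of that state is $\delta_k^2N^2$, and joint convexity of $(\partial p)^2/p$ over the decomposition should then give a bound of the form $c\,\epsilon N^2$. The main obstacle is tracking constants so as to land exactly on $4N^2/(d+1)$. I would first try bounding $\lvert\partial p_k\rvert\le N\lvert\Tr[M'\omega_-]\rvert\cdot\lVert M_k^{AB}\rVert$-type terms and applying Cauchy–Schwarz, $(\sum_k a_k)^2/\sum_k b_k\le\sum_k a_k^2/b_k$, using $\Tr[M'\omega_+]$ in the denominator. If the product structure is awkward, I would fall back on the full PPT relaxation.

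\textbf{Erasure lower bound.} After $\mc E_\lambda$, Bob measures $\wt B B'$ with $\{\Pi_{\rm sym},\Pi_{\rm anti},\lvert e\rangle\langle e\rvert\otimes I\}$. If there is no erasure (probability $1-\lambda$), the projectors onto the symmetric and antisymmetric subspaces identify the label $\pm$ perfectly. Bob announces the label, and Alice and Bob each measure their mode in the basis $(\lvert0\rangle\pm\lvert N\rangle)/\sqrt2$. The parity of the outcomes has probabilities $\tfrac12(1\pm\cos N\theta)$, with the sign flipped for $\psi_{N-}$, which Bob corrects using the label. This classical distribution has Fisher information exactly $N^2$. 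If erasure occurs, the conditional state $D\otimes\cdots$ is $\theta$-independent and contributes nothing. Since the flag is a classical label available to the LOCC parties, the Fisher information is the weighted sum, giving $(1-\lambda)N^2$.

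\textbf{Depolarizing lower bound.} Here $\mc D_p\otimes\mathrm{id}$ maps $\rho_\pm\mapsto(1-p)\rho_\pm+p\,I/d^2$. Bob again projects onto $\Pi_{\rm sym}$ or $\Pi_{\rm anti}$. The resulting conditional probabilities are
\begin{align*}
P(\mathrm{sym}\mid +)&=1-p+\tfrac{p(d+1)}{2d}, & P(\mathrm{sym}\mid -)&=\tfrac{p(d+1)}{2d},\\
P(\mathrm{anti}\mid +)&=\tfrac{p(d-1)}{2d}, & P(\mathrm{anti}\mid -)&=1-p+\tfrac{p(d-1)}{2d}.
\end{align*}
Conditioned on outcome $s$, the $AB$ state is $\tfrac12[(1+\delta_s)\psi_{N+}+(1-\delta_s)\psi_{N-}]$, whose parity statistics are $\tfrac12(1\pm\delta_s\cos N\theta)$. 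The same parity protocol therefore yields Fisher information $\delta_s^2N^2\sin^2(N\theta)/(1-\delta_s^2\cos^2 N\theta)\ge\delta_s^2N^2$, choosing the measurement phase optimally at each $\theta$. Averaging with weights $q_s$ gives $N^2\sum_s q_s\delta_s^2$. With $q_{\rm sym}=\tfrac12(1+p/d)$, $q_{\rm anti}=\tfrac12(1-p/d)$ and $\delta_{\rm sym}=(1-p)/(1+p/d)$, $\delta_{\rm anti}=(1-p)/(1-p/d)$, this sum simplifies to $(1-p)^2/(1-p^2/d^2)$, matching the claim.
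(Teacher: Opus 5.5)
Your two lower bounds follow essentially the paper's argument. Bob measures $\{\Pi_+,\Pi_-\}$ together with the erasure flag on $\widetilde B B'$, and then a product two-mode readout on $AB$ gives $N^2\sum_y (a_y-b_y)^2/(2(a_y+b_y))$; your arithmetic here is correct. One small slip: $\delta^2N^2\sin^2(N\theta)/(1-\delta^2\cos^2 N\theta)$ is $\le \delta^2N^2$, not $\ge$. You get exactly $\delta^2N^2$ by fixing the relative phase of the local bases so that the cosine vanishes at $\theta_0$. Since this phase does not depend on $\delta$, one readout serves both of Bob's outcomes.

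The upper bound, however, has a genuine gap, for two reasons.

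First, the decomposition you use is not available. An LOCC effect across $AA':BB'$ is a sum of products $P_k^{AA'}\otimes Q_k^{BB'}$, and $P_k^{AA'}$ may be entangled between key and shield, because Alice can measure $A$ and $A'$ jointly. Such an effect generally cannot be written as $\sum_k M_k^{AB}\otimes M_k^{A'B'}$ with positive factors, so there is no per-outcome conditional $AB$ state of the kind you describe.

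Second, your inequality $|\Tr[M'\omega_-]|\le\epsilon\,\Tr[M'\omega_+]$ with $\epsilon=O(1/d)$ is false for single PPT effects. Take $M'=\Pi_+=(I+S)/2$: it is separable, and $\Tr[\Pi_+\omega_-]=\Tr[\Pi_+\omega_+]=1/2$.
\begin{itemize}
\item The facts you cite only yield the one-sided bound $\Tr[M'\rho_-]\le\frac{d+1}{d-1}\Tr[M'\rho_+]$.
\item Data hiding is a property of complete POVMs. There, $\sum_z\Tr[L_z(\rho_+-\rho_-)]=0$ (in the twirled form, both $E$ and $I-E$ being PPT) converts the one-sided control into $\sum_z|\Tr[L_z(\rho_+-\rho_-)]|\le 4/(d+1)$.
\item Individual outcomes can be perfectly biased. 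Measure both shields in the computational basis and condition on equal results: this certifies the $+$ label with probability $1/(d+1)$. The LOCC Fisher information is therefore already at least $N^2/(d+1)$.
\item By contrast, per-outcome biases of size $O(1/d)$, summed as $\sum_k p_k\delta_k^2N^2$, would give only $O(N^2/d^2)$. So the per-outcome picture cannot be right.
\end{itemize}

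The missing ingredient is a per-outcome bound that is linear in the coherence, which is then summed against a POVM-level data-hiding norm. The paper proceeds as follows.
\begin{itemize}
\item Write $P(y|\theta)=D_y+\Re(e^{iN\theta}t_y)$, with $t_y=\Tr[\Delta\,\langle N0|M_y|0N\rangle]$ and $\Delta=(\rho_+-\rho_-)/2$.
\item Positivity gives $D_y\ge|t_y|$. Hence $(\partial_\theta P)^2/P\le N^2(|t_y|-\Re z_y)$, where $z_y=e^{iN\theta_0}t_y$.
\item Since $\sum_y t_y=0$, the second term drops out after summing, so $F\le N^2\sum_y|t_y|$.
\item The nontrivial step is $\sum_y|t_y|\le\|\rho_+-\rho_-\|_{\mathrm{PPT}}$. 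The operator $\langle N0|M_y|0N\rangle$ is an off-diagonal block, not a PPT effect. The paper therefore sandwiches $M_y$ between product vectors $|a^{(y,\beta)}_\pm\rangle^A|b_\beta\rangle^B$. For each $\beta$ this produces a PPT POVM $\{\tfrac12 H^\pm_{y,\beta}\}$ on $A'B'$, and averaging over $\beta$ isolates $\tfrac12(e^{-i\phi_y}T_y+e^{i\phi_y}T_y^\dagger)$.
\item Finally, a twirling computation gives $\|\rho_+-\rho_-\|_{\mathrm{PPT}}=4/(d+1)$, which yields the bound $4N^2/(d+1)$.
\end{itemize}
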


In the limit $d\gg 1$, the above bounds establish a separation between the LOCC sensing performance with and without assistance that grows at least linearly with $d$.
Importantly, this enhancement does not require a noiseless assisting channel: for any fixed $\lambda<1$ or $p<1$, the assisted LOCC Fisher information remains of order $N^2$. 
Thus, we introduce a new paradigm of noisy distributed sensing: starting from a state shared by Alice and Bob, Alice transmits a subsystem to Bob through a noisy channel to assist the sensing. We show this transmission can substantially enhance the sensitivity even when the assisting channel is highly noisy. 
Because the assisting channel transmits only the parameter-independent subsystem, the parameter-encoded systems remain distributed between Alice and Bob. Parameter estimation must therefore still be performed through an LOCC protocol involving local measurements by both parties.
Notably, $\gamma_N$ has a distillable key of at least one bit, $K_D\geq 1$, while its distillable entanglement is only $E_D=O(1/d)$ \cite{vidal2002computable,plenio2005logarithmic,horodecki2005secure}, mirroring the gap between the LOCC sensing performance with and without the assisting channel and suggesting that privacy may certify metrological capability under assistance. Further discussion and detailed proofs are provided in Sec. B of the Supplemental Material.



\textit{States with private correlation} -- We now generalize the above example and show that privacy certifies sensing capability for general states shared by Alice and Bob. More precisely, if measuring one of Alice's subsystems in the encoding basis yields positive private information with Bob, then an assisting-channel recovery strategy guarantees a lower bound on the LOCC Fisher information.



As illustrated in Fig.~\ref{fig:Set_up}(b), let $\sigma^{XBC}_\theta=U^X_\theta\sigma^{XBC}_0
U^{X\dagger}_\theta$ be a state shared by Alice and Bob, where Alice holds $XC$ and Bob holds $B$. For an arbitrary encoding $U^X_\theta=e^{i\theta H_X}$ on system $X$, write the spectral decomposition as $H_X=\sum_x h_x\ket{x}\bra{x}^X$. We assume $H_X$ is non-degenerate. We then express $\sigma^{XBC}_0$ in the corresponding $X$-block form as
\begin{equation}
\sigma^{XBC}_0
=
\sum_{x,x'}
|x\rangle\!\langle x'|^X
\otimes
\sigma^{BC}_{xx'} .
\end{equation}
Define $p_x=\operatorname{Tr}\sigma^{BC}_{xx}$, $X_+=\{x:p_x>0\}$, and $m=|X_+|$. After measuring $X$ in this basis and discarding Alice's remaining system $C$, the resulting cq state shared by Alice, Bob, and the purifying environment $E$, is
\begin{equation}
\omega_{\sigma_0}^{XBE}
:=
\operatorname{Tr}_C\!\left[
\Delta_X\!\left(|\Omega\rangle\langle\Omega|^{XCBE}\right)
\right],    
\end{equation}
where $\ket{\Omega}^{XCBE}=\sum_x\sqrt{p_x}\ket{x}^X\ket{\phi_x}^{BCE}$ is any purification of $\sigma_0^{XBC}$, and $\Delta_X(\cdot)=\sum_x |x\rangle\langle x|^X(\cdot)|x\rangle\langle x|^X$ denotes the dephasing map. We define the fixed-$X$ private information by
\begin{equation}
 P_X(\sigma_0)
:=
I(X;B)_{\omega_{\sigma_0}}
-
I(X;E)_{\omega_{\sigma_0}} .   
\end{equation}
The condition $P_X(\sigma_0)>0$ means that measuring $X$ in the eigenbasis of the encoding Hamiltonian $H_X$ produces a post-measurement cq state from which Alice and Bob can distill secret key. 


We first explain the mechanism connecting private information to sensing. Under the encoding $U_\theta^X =e^{i\theta H_X}$, each off-diagonal block $\sigma^{BC}_{xx'}$ ($x\neq x'$) acquires a relative phase and thus carries the sensing signal. Hence, nonzero Fisher information requires some of this coherence to survive and be accessible to the measurement. Positive private information guarantees such surviving off-diagonal coherence, as follows from
\begin{equation}
\frac{\|\sigma^{BC}_{xx'}\|_1}{\sqrt{p_xp_{x'}}}=\|\tr_E(\ket{\phi_x}\bra{\phi_{x'}}^{BCE})\|_1=F(\sigma_x^E,\sigma_{x'}^E). 
\end{equation}
where
$\sigma_x^E=\tr_{BC}(\ket{\phi_x}\bra{\phi_x}^{BCE})$, and $F(\cdot,\cdot)$ denotes the fidelity. 
Indistinguishability of the environmental states is equivalent to nonzero off-diagonal coherence in the $X$-block decomposition, and positive private information ensures that these environmental states cannot be perfectly distinguished.
This intuition is captured quantitatively by specializing known bounds from the resource theory of coherence to the encoding-basis block decomposition considered here~\cite{rana2017logarithmic,bischof2021quantifying,chitambar2019quantum}. 
Importantly, converting the coherence certified by the state's privacy into an LOCC-accessible sensing signal requires Bob to access both $B$ and $C$. When $B$ and $C$ remain separated, this coherence may not be locally accessible, as illustrated by the example in Theorem~\ref{main_thm:hiding-recovery}. Transmitting $C$ to Bob enables a joint $BC$ readout whenever the transmission succeeds.
By combining the privacy--coherence connection with an explicit protocol using an assisting channel followed by a constructive LOCC measurement, we convert private information into the quantitative guarantee on locally accessible Fisher information stated in the following theorem.


\begin{thm}\label{thm:general-fixed-key-assisted}
Let $\sigma^{XBC}_\theta$ be the state defined above, with Alice holding $XC$ and Bob holding $B$. 
Assume that $P_X(\sigma_0)>0$. After sending $C$ through an erasure channel with erasure probability $\lambda$,
\begin{equation}
\widetilde{\sigma}^{XB\widetilde C}_\theta
:=
(\operatorname{id}_{XB}\otimes
\mathcal E^{C\to\widetilde C}_{\lambda})
(\sigma^{XBC}_\theta),    
\end{equation}
there exists a LOCC measurement such that
\begin{equation}
\left.
F_{\mathrm{LOCC}}^{(X: B\widetilde C)}
\left(
\widetilde{\sigma}^{XB\widetilde C}_\theta
\right)
\right|_{\theta=\theta_0}
\ge
(1-\lambda)
\frac{\delta_H^2}{4L_m^2}
\left(2^{P_X(\sigma_0)}-1\right)^2.
\end{equation}
where $\delta_H
:=
\min_{\substack{x\neq x'\\x,x'\in X_+}}
|h_x-h_{x'}|>0$,
$L_m
:=
\max_{1\leq j\leq m}
\left(
\sum_{\ell=1}^{j-1}\frac{1}{\ell}
+
\sum_{\ell=1}^{m-j}\frac{1}{\ell}
\right)=\Theta(\log m)$.
\end{thm}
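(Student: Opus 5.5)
The proof has three stages. First, lower bound the coherence in the $X$-block decomposition by the private information. Second, reduce the erasure-assisted problem to a heralded one. Third, construct an explicit LOCC measurement whose classical Fisher information is controlled by that coherence.

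\textbf{Privacy to coherence.} Let $E$ purify the state, and define the environment states $\sigma_x^E$ as in the text, with $p_x$ the probabilities on $X_+$. I would bound $I(X;B)_\omega\le I(X;BC)_{\omega}$ by data processing. Since the global state is pure after dephasing only $X$, we have $I(X;BC)-I(X;E)$ for the cq state $\sum_x p_x\ket{x}\bra{x}\otimes\phi_x^{BCE}$. This quantity equals $S(BC)-S(E)$ minus terms that cancel, because $S(\phi_x^{BC})=S(\phi_x^{E})$. It is therefore the coherent information of the dephased state, which is at most the relative entropy of coherence of $\sigma_0^{XBC}$ with respect to the block decomposition. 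So $P_X(\sigma_0)\le C_{\rm rel}$. Next, I would use known inequalities from coherence theory (Rana et al., Bischof et al.) to bound $C_{\rm rel}$ by the logarithm of the block $\ell_1$-type quantity:
\begin{equation}
2^{P_X(\sigma_0)}-1\le \sum_{x\neq x'}\sqrt{p_xp_{x'}}\,F(\sigma_x^E,\sigma_{x'}^E)=\sum_{x\neq x'}\|\sigma^{BC}_{xx'}\|_1 .
\end{equation}
This bound comes from $S(\Delta(\rho))-S(\rho)\le\log(1+C_{\ell_1})$ after grouping by blocks, using the fidelity identity stated before the theorem.

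\textbf{Erasure reduction.} Bob receives $\widetilde C$ and checks for the flag $\ket e$. This check is a measurement on the parameter-independent subsystem. With probability $1-\lambda$ he holds $\sigma_\theta^{XBC}$ with the bipartition $X:BC$, and the erased branch contributes nonnegative Fisher information. By convexity and additivity of Fisher information over heralded branches, it then suffices to show
\begin{equation}
F_{\rm LOCC}^{(X:BC)}(\sigma_\theta)\ge \frac{\delta_H^2}{4L_m^2}\Big(\sum_{x\ne x'}\|\sigma_{xx'}^{BC}\|_1\Big)^2 .
\end{equation}

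\textbf{Constructive LOCC measurement (main obstacle).} Alice measures $X$ in a Fourier-type basis on $X_+$ and announces the outcome $k$. Bob then performs the Helstrom-type measurement on $BC$, conditioned on $k$. For a pair $(x,x')$, Bob's conditional operator contains $\sigma_{xx'}^{BC}e^{i\theta(h_x-h_{x'})}$, and its derivative contributes a factor of at least $\delta_H$ in magnitude. I would lower bound the classical Fisher information by $(\partial_\theta\langle O\rangle)^2/\mathrm{Var}(O)$ for a bounded observable $O$ with $\|O\|\le1$. The choice is $O=\sum_{x\neq x'}\ket{x}\bra{x'}\otimes U_{xx'}$ with a phase offset, where $U_{xx'}$ are the polar unitaries of $\sigma_{xx'}^{BC}$ rotated so the derivative is real. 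A naive choice does not keep $\|O\|\le 1$, and this is where $L_m$ enters. The coefficients must be weighted by a Toeplitz kernel, for example $\mathrm{sign}(j-j')/|j-j'|$ after ordering $h_x$ in increasing order. Such a kernel has operator norm bounded by $L_m$, with the row sums appearing in the definition of $L_m$. Its antisymmetric form also gives $|h_x-h_{x'}|/|j-j'|\ge\delta_H$. This handles the derivative; the constant factor of $1/4$ comes from $\mathrm{Var}\le1$ together with a factor of $2$ from the Hermitian part. The remaining obstacle is to show that $O$ is implementable by LOCC. I would expand $O$ over Alice's measurement in a basis diagonalizing the weighted phase structure, followed by Bob's conditional measurement of $U_{xx'}$. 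If that fails, I would use the bound on the one-way LOCC Fisher information via the mean-value derivative of a two-outcome measurement, applied to each pair with the pair weights chosen above.
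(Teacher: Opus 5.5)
\paragraph{Stages 1 and 2.} These are sound and essentially match the paper, with two small corrections.

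First, your chain
$P_X\le I(X;BC)_\omega-I(X;E)_\omega=I(X\rangle BC)_{\sigma_0}\le S(\Delta_X\sigma_0)-S(\sigma_0)=H(X|E)_\omega$
is equivalent to the paper's $P_X=H(X|E)_\omega-H(X|B)_\omega\le H(X|E)_\omega$. The quantity in the middle is the coherent information of $\sigma_0^{XBC}$ itself, not of the dephased state; the dephased state has coherent information $-H(X|BC)\le 0$.

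Second, the inequality you need is the \emph{block} bound
$S(\Delta_X\rho)-S(\rho)\le\log\bigl(1+\sum_{x\neq x'}\|\sigma^{BC}_{xx'}\|_1\bigr)$.
This does not follow from the scalar Rana et al.\ inequality by ``grouping.'' Refining to a product basis only enlarges the right-hand side, because entrywise $\ell_1$ norms dominate trace norms. You need the block version, which the paper proves directly: it dominates $\rho$ by the block-diagonal $\Omega=\rho+\sum_{x<x'}P^{(xx')}$, whose trace is $1+\sum_{x\ne x'}\|\sigma^{BC}_{xx'}\|_1$.

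\paragraph{Stage 3 is the gap.} This stage is the substance of the theorem, and your proposal leaves it open.

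The estimate $F_{\rm cl}\ge(\partial_\theta\langle O\rangle)^2/\mathrm{Var}(O)$ bounds the Fisher information of a measurement $\{M_y\}$ only if $O=\sum_y f(y)M_y$ for some real $f$. Your $O=\sum_{x\ne x'}K_{xx'}|x\rangle\langle x'|\otimes U_{xx'}$ is not of this form for the protocol you describe.
\begin{itemize}
\item Suppose Alice measures one orthonormal basis $\{|g_a\rangle\}_{a=1}^m$ and Bob then measures observables $Q_a$. Every block $\langle x|O|x'\rangle=\sum_a\langle x|g_a\rangle\langle g_a|x'\rangle Q_a$ then lies in an $m$-dimensional span. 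That span cannot contain $m(m-1)/2$ generic pair-dependent polar unitaries once $m\ge4$.
\item $U_{xx'}$ is not Hermitian, so Bob cannot ``measure'' it.
\item The Fourier-plus-Helstrom protocol also fails on its own terms. For outcome $k$, Bob's conditional derivative is proportional to $\sum_{x\ne x'}\omega^{-k(x-x')}(h_x-h_{x'})e^{i(h_x-h_{x'})\theta_0}\sigma^{BC}_{xx'}$. This mixes different blocks acting on the same space $BC$, so they can cancel, and its trace norm is not bounded below by $\sum\|\sigma^{BC}_{xx'}\|_1$.
\item The operator-norm bound on your Toeplitz kernel controls a global observable. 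It therefore yields a QFI bound, not an LOCC one.
\end{itemize}

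\paragraph{The missing idea.} Alice needs an overcomplete, pair-resolving POVM, and $L_m$ enters through the normalization of that POVM rather than through an operator-norm estimate. Your fallback sentence points this way but supplies none of the ingredients, which are as follows.
\begin{itemize}
\item Order $h_1<\cdots<h_m$. For each $j<k$ set $E_{jk,\pm}=\frac{1}{L_m(k-j)}|\pm_{\alpha_{jk}}\rangle\langle\pm_{\alpha_{jk}}|$, where $|\pm_\alpha\rangle=(|j\rangle\pm e^{i\alpha}|k\rangle)/\sqrt2$.
\item Complete with $E_\perp=I-\sum E_{jk,\pm}$. This is positive because the total weight on $|j\rangle\langle j|$ is $\frac1{L_m}\sum_{k\ne j}|j-k|^{-1}\le 1$. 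It is also diagonal, so its outcome probability is $\theta$-independent.
\item Given $(j,k,\pm)$, Bob projects onto the positive and negative parts of $H_{\alpha_{jk}}=\operatorname{Im}\bigl(e^{i((h_j-h_k)\theta_0+\alpha_{jk})}\sigma^{BC}_{jk}\bigr)$. Choose $\alpha_{jk}$ so that $\|H_{\alpha_{jk}}\|_1\ge\frac12\|\sigma^{BC}_{jk}\|_1$, which is possible because $\|A\|_1\le\|\operatorname{Re}A\|_1+\|\operatorname{Im}A\|_1$.
\item The outcomes of pair $(j,k)$ then have derivative absolute sum at least $\frac{\delta_H}{L_m}\|\sigma^{BC}_{jk}\|_1$.
\item Total probability is at most $1$, so Cauchy--Schwarz gives $F\ge\frac{\delta_H^2}{L_m^2}\bigl(\sum_{j<k}\|\sigma^{BC}_{jk}\|_1\bigr)^2$. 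This is exactly your Stage-2 target.
\end{itemize}

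Your estimator idea does work once this measurement is in place. Take $f$ equal to (Alice's sign)$\times$(Bob's sign), and $f=0$ on $E_\perp$ and on Bob's null outcome. This realizes
$O=\sum_{j<k}\frac{1}{L_m(k-j)}\bigl(e^{-i\alpha_{jk}}|j\rangle\langle k|+\mathrm{h.c.}\bigr)\otimes\bigl(\Pi^{jk}_+-\Pi^{jk}_-\bigr)$.
This is your kernel, but with Bob's blocks given by the sign operators of $H_{\alpha_{jk}}$ instead of polar unitaries. Since $\mathrm{Var}(f)\le 1$, it gives the same bound. The factor-two amplitude loss relative to polar unitaries is where the $1/4$ comes from.
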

Because the assisting transmission is parameter independent, it cannot increase the global quantum Fisher information; it only unlocks sensitivity that was inaccessible under the original LOCC partition. Further discussion and detailed proofs are provided in Sec. C of the Supplemental Material.


In the small-$m$ regime, as exemplified by Theorem~\ref{main_thm:hiding-recovery}, private information provides a constant lower bound on the LOCC Fisher information. 
In the large-$m$ regime, it is useful to make the scaling with $m$ more explicit. Consider the situation where $\norm{H_X}_\infty = 1$ and the eigenvalues are equally spaced between $-1$ and $1$ so that $\delta_H = \Theta(1/m)$. Also, assume the fixed-$X$ private information is nearly maximal, i.e. $P_X=\log m-O(1)$\footnote{Throughout this work, all logarithms are taken to base two.}. Theorem~\ref{thm:general-fixed-key-assisted} then implies the LOCC Fisher information scales as $\Omega\left((\log m)^{-2}\right)$.



\textit{Relation to entanglement}-- A natural question is how Theorem~\ref{thm:general-fixed-key-assisted} should be interpreted when the system $C$ is absent. In this case, the relevant off-diagonal coherence is already supported on Bob's system rather than on a joint $BC$ system. We obtain the same Fisher-information lower bound without the use of an assisting channel. 
However, the meaning of $P_X(\sigma_0)>0$ changes when $C$ is absent. 
In this case, $|\Omega\rangle^{XBE}=
\sum_x \sqrt{p_x}\,|x\rangle^X |\phi_x\rangle^{BE}$
is a purification of $\sigma_0^{XB}$. After dephasing $X$, the conditional
states $|\phi_x\rangle^{BE}$ are pure, and hence
$S(B)_{\phi_x}=S(E)_{\phi_x}$ for every $x$. It follows that
$
P_X(\sigma_0)
= I(X;B)_\omega-I(X;E)_\omega 
= S(B)_{\sigma_0}-S(E)_{\Omega} 
= S(B)_{\sigma_0}-S(XB)_{\sigma_0},
$
which is precisely the coherent information of $\sigma_0^{XB}$.
Thus, when $C$ is absent, the fixed-$X$ private information reduces to the
coherent information of $\sigma_0^{XB}$. 
The sensing guarantee is entanglement based and does not distinguish privacy from entanglement.

This behavior can in fact be expected from the structure underlying private correlations. Recall that any private state has the form \cite{horodecki2005secure}
\begin{equation}
\gamma^{ABA'B'}=U(\ket{\psi^+}\bra{\psi^+}^{AB}\otimes \sigma^{A'B'})U^\dagger
\end{equation}
where $\ket{\psi^+}=\sum_{i=0}^{d-1}\ket{ii}/\sqrt{d}$, $\sigma^{A'B'}$ is arbitrary, and $U=\sum_{i,j}\ket{ij}\bra{ij}^{AB}\otimes U_{ij}^{A'B'}$ is controlled in the computational basis. Without the shield  $A'B'$, the state reduces to a maximally entangled key state on $AB$, so privacy is trivially guaranteed by entanglement. The shield $A'B'$ instead provides a structured way to incorporate compatible noise while preserving privacy without comparable distillable entanglement; indeed, the distillable entanglement can be made arbitrarily small while maintaining a fixed nonzero distillable key~\cite{horodecki2005secure}. Importantly, not every form of noise is compatible with the preservation of a private key; rather, the shield provides a structured mechanism through which noise can strongly suppress entanglement without destroying privacy.

Since our theorem derives its sensing guarantee from the privacy properties of the state, we expect an analogous distinction between privacy and entanglement to emerge in our setting. The additional system $C$ plays a role analogous to the shield $A'B'$, allowing the privacy-protected coherence to survive noise without requiring a comparable amount of distillable entanglement. 





\textit{Relation to classical correlation} -- 
We construct an example showing that large classical correlation in the encoding basis does not, by itself, guarantee a large LOCC Fisher information, thereby necessitating the need of positive private information.
Intuitively, classical correlations depend only on the diagonal $X$ blocks, but the parameter encoding leaves these diagonal blocks unchanged
and encodes $\theta$ only on off-diagonal coherences between $b\neq b'$.
Thus, large classical correlation does not by itself guarantee a large
phase-sensitive coherence that determines the sensing capability.
The following example makes this separation explicit. 

Let $\dim X=\dim B=\dim C=n$, $b\in\{0,\ldots,n-1\}$. We consider the state
\begin{equation}
\begin{aligned}
\sigma_0^{XBC}=\beta\sum_b\ket{b,v_b}\bra{b,v_b}&+a\sum_{b\neq b'}\ket{b,v_{b'}}\bra{b,v_{b'}}\\
&+ta\sum_{b\neq b'}\ket{b,v_{b'}}\bra{b',v_b}.
\end{aligned}
\end{equation}
where $\ket{b,v_{b'}}=\ket{b}^{X}\ket{v_{b'}}^{BC}=\ket{b}^{X}\ket{b'}^{B}\ket{\chi_{b'}}^{C}$, $\ket{\chi_b}^{C}=\sum_{k=0}^{n-1}
\sqrt{p_k^{(n)}}\,e^{2\pi i bk/n}\ket{k}^{C}$, $\beta=\frac{u}{n}$, $a=\frac{1-u}{n(n-1)}$, $0<u,t<1$,     $p_k^{(n)}>0$, $\sum_{k=0}^{n-1}p_k^{(n)}=1$.
The parameter $\theta$ is encoded with
$U_\theta^X=e^{i\theta H_X}$, where
$H_X=\sum_bh_b\lvert b\rangle\langle b\rvert^X$,
$\operatorname{Tr}H_X=0$, and the coefficients $h_b$ may otherwise
be chosen arbitrarily. We introduce the first term as a phase-independent background which creates the large classical correlation; while the second and third terms provide the sensing capability of $\sigma$. We choose $\beta\geq a(1+t)$, which allows us to upper bound the locally accessible Fisher information. 


For fixed $u$,  the encoding-basis mutual information of $\Delta_X(\sigma_0)$ is 
\begin{equation}
    I(X;B)_\omega
    =
    u\log n-h_2(u)+o(1),\quad n\rightarrow\infty,
\end{equation}
where $h_2(u):=-u\log u-(1-u)\log (1-u)$ is the binary entropy, so the classical correlation can be arbitrarily large. But the
environment retains even more information about $X$, and we find
\begin{equation}
P_X(\sigma_0):=I(X;B)_\omega-I(X;E)_\omega<0,\quad n\geq3.
\end{equation}

Next we compare the global quantum Fisher information with the LOCC ones. We show that the quantum Fisher information is 
\begin{equation}
F_Q(\sigma_\theta)=\Theta\left(\frac{\operatorname{Tr}H_X^2}{n}\right).
\end{equation}
Let \(\mathcal{N}^{C\rightarrow C'}\) be any  parameter-independent channel acting on \(C\). 
We further show that both the unassisted and assisted LOCC Fisher information is upper bounded,
\begin{equation}
\begin{aligned}
&F_{\mathrm{LOCC}}^{(XC:B)}(\sigma_\theta)
=O\left(\frac{
\Delta_H^2R_n
}{
n
}\right)
,\quad R_n=2^{H_{1/2}(p^{(n)})},\\
&F_{\mathrm{LOCC}}^{(X:BC')}
\left[
\left(
\operatorname{id}^{XB}\otimes
\mathcal{N}^{C\to C'}
\right)(\sigma_\theta)
\right]
\leq
\frac{2}{n}F_Q(\sigma_\theta).
\end{aligned}    
\end{equation}
where $\Delta_H=\max_{b,b'}|h_b-h_{b'}|$, and $0\leq H_{1/2}(p^{(n)})\leq \log n$ denotes the Rényi entropy of order $1/2$. The factor $R_n$ can be kept bounded independently of $n$ by choosing a family $p^{(n)}$ whose $H_{1/2}$ remains bounded. For $h_b=\frac{b-(n-1)/2}{n-1}$ and a suitable family $p^{(n)}$ satisfying $R_n=O(1)$, we have $F_Q(\sigma_\theta)=\Theta(1)$, $F_{\mathrm{LOCC}}^{(X:BC')}=O(1/n)$, $F_{\mathrm{LOCC}}^{(XC:B)}=O(1/n)$.

The performance gap between LOCC and global measurement comes from at least two reasons: First, a global measurement in the Bell-like basis $\ket{\Psi_{bb'}^\pm}=(\ket{b,v_{b'}}\pm\ket{b',v_b})/\sqrt{2}$ simultaneously resolves the coherences $\ket{b,v_{b'}}\bra{b',v_b}$ for all $n(n-1)/2$ orthogonal pair subspaces, whereas LOCC pairwise readouts \(\ket{\phi_{bb',\pm}}^X\otimes\ket{\varphi_{bb',\pm}}^{BC}\), where \(\ket{\phi_{bb',\pm}}^X=(\ket{b}^X\pm\ket{b'}^X)/\sqrt{2}\) and \(\ket{\varphi_{bb',\pm}}^{BC}=(\ket{v_b}^{BC}\pm\ket{v_{b'}}^{BC})/\sqrt{2}\) rely on mutually incompatible local superpositions. Second, the background term $\beta\sum_b\ket{b,v_b}\bra{b,v_b}$ does not contribute to the global Bell-like outcomes but adds noise to LOCC product-measurement outcomes.

A key observation is that, although the mutual information in this example can grow logarithmically with the dimension, it does not yield a Fisher-information guarantee comparable to that obtained from private information in Theorem~\ref{thm:general-fixed-key-assisted}. When $\delta_H=\Theta(1/n)$ and $P_X=u\log n-O(1)$, the resulting lower bound on the LOCC Fisher information scales as $\Theta(n^{2u-2}(\log n)^{-2})$. By contrast, classical correlation provides no analogous guarantee: the sensing performance can remain $O(1/n)$ even when $I(X;B)_\omega=u\log_2 n-O(1)$. Therefore, when $u>1/2$, the LOCC Fisher information in this classically correlated example vanishes faster than the lower bound that Theorem~\ref{thm:general-fixed-key-assisted} would guarantee for the same amount of private information.

\textit{Channel interpretation} -- 
We can also formulate the problem in the channel picture, in which the probe is constructed from an input ensemble for a quantum channel and the sensing guarantee is expressed in terms of the channel’s private information.  Let $\mathcal{N}^{A'\to B}$ be a channel with Stinespring isometry $V_{\mathcal N}^{A'\to BE}$, and consider an input ensemble $\{p_x,\rho_x^{A'}\}_{x\in\mathcal X}$ with positive private information:
\begin{equation}
P^{(1)}(\mathcal{N}^{A'\to B},\{p_x,\rho_x^{A'}\})
:=
I(X;B)-I(X;E)>0. 
\end{equation}
For each $x$, choose $|\phi_x\rangle^{CA'}$ as purifications of $\rho_x^{A'}$. 
Alice uses these purifications to construct the coherent probe state
\begin{equation}
|\Psi\rangle^{XCA'}=
\sum_x\sqrt{p_x}
|x\rangle^X|\phi_x\rangle^{CA'} .
\end{equation}
Subsystem $A'$ is then transmitted to Bob through the channel $\mc N^{A'\rightarrow B}$ to form the distributed state. Meanwhile, the parameter is encoded on $X$ via $U_\theta^{X}=\exp(i\theta H_X)$ as before. Although written as an operation on $X$, $U_\theta^{X}$ provides an effective description of any sensing process that encodes an $x$-dependent phase on the coherent components. In particular, the physical encoding may act jointly on all subsystems, 
while we consider situations where the overall parameter dependence can be captured by $U_\theta^{X}$. 

The remaining discussion follows similarly from the state picture discussed earlier.
Alice subsequently sends subsystem $C$ to Bob through the assisting channel $\mc E_\lambda^{C \to \widetilde C}$ and produces the final state:
\begin{equation}
\begin{aligned}
&\sigma^{XB\widetilde C}_\theta=(( \mc E^{C \to \widetilde C}_\lambda \circ U_\theta^X\circ \mc N^{A'\rightarrow B}) )(\ket{\Psi}\bra{\Psi}^{XCA'}).   
\end{aligned}
\end{equation}
Alice has access to $X$ and Bob has access to $B\widetilde C$. We have
\begin{equation}
\mc F_{\mathrm{LOCC}}^{(X:B\widetilde C)}(\sigma^{XB\widetilde C}_\theta)
\ge
(1-\lambda)
\frac{\delta_H^2}{4L_m^2}
\left(2^{P^{(1)}}-1\right)^2.    
\end{equation}
The auxiliary system $C$ is introduced to purify the input ensemble. Its role can be understood from the relation:
\begin{equation}
\|( \operatorname{id}^{C}\otimes \mc N^{A'\rightarrow B})(\ket{\phi_x}\bra{\phi_{x'}}^{CA'})\|_1
=F(\sigma_x^E,\sigma_{x'}^E),   
\end{equation}
where $\sigma_x^E=\mc N^c(\rho_x^{A'})$ and $\mc N^c$ is the
complementary channel of $\mc N$. Since $P^{(1)}>0$ implies that the environment cannot determine $X$
perfectly; equivalently, the states $\{\sigma_x^E\}_x$ cannot be
perfectly distinguished. The coherence and the sensing capability are preserved when the purifying subsystems $C$ are included. 
If the input ensemble consists of pure states, $C$ is unnecessary and the private information reduces to the coherent information $I_c(\bar{\rho},\mathcal{N})$ of the average input $\bar{\rho}=\sum_xp_x\rho_x^{A'}$. 
Positive private information then implies positive one-shot coherent information and hence positive quantum capacity. Thus, in this case, the sensing lower bound is certified by a positive coherent-information witness of quantum capacity. 


Our channel construction mirrors the superactivation mechanism of Smith and Yard~\cite{smith2008quantum}, in which a zero-quantum-capacity channel with positive private information acquires positive joint capacity when combined with a $50\%$ erasure channel. In their construction, the erasure channel transmits a purification of the ensemble witnessing the private information, precisely the role played here by the auxiliary system $C$. Our result can therefore be viewed as a metrological analogue of the same activation mechanism: transmitting $C$ through the assisting channel converts privacy-protected coherence into LOCC-accessible Fisher information.


\textit{Conclusion and discussion} -- 
We have shown that private correlations provide a quantitative certificate of
distributed sensing capability in the presence of an assisting channel. Our data-hiding example exhibits a large separation between assisted and unassisted LOCC sensing performance, showing that transmitting only an unencoded shield subsystem can greatly enhance sensitivity, even through a noisy channel. More
generally, positive private information in the encoding basis guarantees a
nonzero lower bound on the LOCC Fisher information after the unencoded
subsystem is transmitted. The underlying mechanism is that incomplete knowledge of the encoding variable
by the environment forces coherence between different encoding branches to
remain in the state. The assisting channel then converts this coherence into an
LOCC-accessible sensing signal. This mechanism also clarifies why privacy  provides the relevant guarantee.
This perspective suggests that the hierarchy of correlations and capacities
developed in quantum Shannon theory may provide a corresponding hierarchy of
sensing guarantees, allowing useful sensing resources in noisy quantum systems
to be identified through well-studied operational properties.



\textit{Acknowledgments} -- We thank Paweł Horodecki for helpful discussions and comments.   YW, PW, and GS are supported under NSERC-NSF alliance grant ALLRP-586858-2023 and NSERC Discovery grant RGPIN-2025-02094. YW and SZ acknowledges funding provided by Perimeter Institute for Theoretical Physics, a research institute supported in part by the Government of Canada through the Department of Innovation, Science and Economic Development Canada and by the Province of Ontario through the Ministry of Colleges and Universities.   YW also acknowledges funding from the NSERC-UKRI Alliance Grant No. ALLRP-597823-24.  We acknowledge the use of ChatGPT (OpenAI) to assist with derivations and improve the presentation of this work; any AI-generated content was independently verified by the authors.

\bibliography{arxiv}

\appendix

\widetext

\section{Preliminaries}
\subsection{Private states}\label{SI_sec:pbit}

In this subsection, we briefly review private states. By measuring the key systems of a private state, Alice and Bob can obtain a perfectly secure classical key even when an eavesdropper holds a purification of the state \cite{horodecki2005secure}. Private states therefore play a role in secret-key distillation analogous to that played by maximally entangled states in entanglement distillation.

More precisely, a state $\gamma^{ABA'B'}$ is called an $l$-bit private state if measuring the key systems $A$ and $B$ in the computational basis produces a perfectly correlated $l$-bit key that is completely independent of the eavesdropper's purifying system. Here, $A$ and $B$ each consist of $l$ qubits, or equivalently have dimension $K=2^l$. Alice holds $AA'$, Bob holds $BB'$, and $A'$ and $B'$ are referred to as the shield systems.
A state with key systems $A,B$ and shield systems $A',B'$ is a private state if and only if it has the form \cite{horodecki2005secure}
\begin{equation}
    \gamma^{ABA'B'}
    =
    \frac{1}{K}\sum_{i,j=0}^{K-1}
    |ii\rangle\langle jj|^{AB}\otimes U_i^{A'B'} \sigma^{A'B'} (U_j^{A'B'})^\dagger,
\end{equation}
where \(K=2^l\) is the key dimension, \(\sigma^{A'B'}\) is a quantum state on \(A'B'\), and \(\{U_i^{A'B'}\}_{i=0}^{K-1}\) are unitary operators on the shield system.

In this work, we focus on the two-dimensional key subspace spanned by \(|N0\rangle\) and \(|0N\rangle\). Accordingly, we consider private states of the form
\begin{equation}
    \gamma^{ABA'B'}
    =
    \frac{1}{2}\Bigl(
    |N0\rangle\langle N0|^{AB}\otimes U_0\sigma U_0^\dagger
    +
    |0N\rangle\langle 0N|^{AB}\otimes U_1\sigma U_1^\dagger
    +
    |N0\rangle\langle 0N|^{AB}\otimes U_0\sigma U_1^\dagger
    +
    |0N\rangle\langle N0|^{AB}\otimes U_1\sigma U_0^\dagger
    \Bigr),
\end{equation}
where \(\sigma=\sigma^{A'B'}\) is a quantum state on the shield system \(A'B'\), and \(U_0,U_1\) are unitary operators on \(A'B'\).

We encode the parameter \(\theta\) on the key system by a phase shift on Alice's mode, namely
\[
    V_\theta^{AB}|N0\rangle = e^{-iN\theta}|N0\rangle,
    \qquad
    V_\theta^{AB}|0N\rangle = |0N\rangle.
\]
The resulting family of states is
\begin{equation}
\begin{aligned}
        \gamma^{ABA'B'}(\theta)
    &=
    \frac{1}{2}\Bigl(
    |N0\rangle\langle N0|^{AB}\otimes U_0\sigma U_0^\dagger
    +
    |0N\rangle\langle 0N|^{AB}\otimes U_1\sigma U_1^\dagger \\
    & +
    e^{-iN\theta}|N0\rangle\langle 0N|^{AB}\otimes U_0\sigma U_1^\dagger
    +
    e^{iN\theta}|0N\rangle\langle N0|^{AB}\otimes U_1\sigma U_0^\dagger
    \Bigr).
\end{aligned}
\end{equation}
For later use, given a quantum channel \(\mc E^{A'\to \wt B}\), we write
\begin{equation}
    \wt\gamma^{AB\wt B B'}(\theta)
    :=
    (\mc E^{A'\to \wt B}\otimes \mathrm{id}^{ABB'})(\gamma^{ABA'B'}(\theta)).
\end{equation}

\color{black}
\subsection{Fisher information preliminaries}
\label{SI_subsec:Fisher_information}

Throughout this section, Fisher information is evaluated locally at a fixed
operating point~$\theta_0$.

\begin{definition}[Classical Fisher information]
Let $\rho(\theta)$ be a differentiable family of quantum states and let
$M=\{M_x\}_x$ be a POVM.  Set
\[
    p(x|\theta):=\Tr\bigl(\rho(\theta)M_x\bigr).
\]
The classical Fisher information generated by $M$ at~$\theta_0$ is
\cite{kay1993statistical}
\begin{equation}
    F_{\mathrm{cl}}(\rho(\theta_0);M)
    :=
    \sum_{x:\,p(x|\theta_0)>0}
    \frac{
        \left.\bigl(\partial_\theta p(x|\theta)\bigr)^2
        \right|_{\theta=\theta_0}
    }{p(x|\theta_0)}.
\end{equation}
Terms with $p(x|\theta_0)=0$ are understood by the usual continuous-extension
convention whenever the limit exists; at the regular points considered below
they cause no difficulty.
\end{definition}

\begin{definition}[Unrestricted quantum Fisher information]
The quantum Fisher information (QFI) of $\rho(\theta)$ at~$\theta_0$ is
\cite{BraunsteinCaves1994}
\begin{equation}
    \mc F_Q(\rho(\theta_0))
    :=
    \sup_M F_{\mathrm{cl}}(\rho(\theta_0);M),
\end{equation}
where the supremum is over all POVMs.  Equivalently,
\begin{equation}
    \mc F_Q(\rho(\theta_0))
    =
    \Tr\bigl(\rho(\theta_0)L_{\theta_0}^2\bigr),
\end{equation}
where the symmetric logarithmic derivative $L_{\theta_0}$ satisfies
\begin{equation}
    \left.\partial_\theta\rho(\theta)\right|_{\theta=\theta_0}
    =
    \frac12\bigl(
        L_{\theta_0}\rho(\theta_0)
        +
        \rho(\theta_0)L_{\theta_0}
    \bigr).
\end{equation}
\end{definition}

\begin{definition}[LOCC Fisher information]
For a differentiable family of bipartite states $\rho^{AB}(\theta)$, define
\begin{equation}
    \mc F_{\mathrm{LOCC}}^{(A:B)}(\rho(\theta_0))
    :=
    \sup_{M\in\mathrm{LOCC}(A:B)}
    F_{\mathrm{cl}}(\rho(\theta_0);M).
\end{equation}
Here the supremum is over POVMs implementable by local operations and classical
communication across the bipartition $A:B$.
\end{definition}

The optimizing POVM may be calibrated using the known operating
point~$\theta_0$, but it is held fixed when the derivative with respect to
$\theta$ is evaluated.  In particular,
\begin{equation}
    \mc F_{\mathrm{LOCC}}^{(A:B)}(\rho(\theta_0))
    \le
    \mc F_Q(\rho(\theta_0)).
\end{equation}


\subsection{Relation to prior work}

Previous studies have investigated whether bound entanglement can serve as a resource for quantum metrology. 
Czekaj et al. constructed a family of multipartite bound-entangled states whose global quantum Fisher information exhibits Heisenberg scaling \cite{czekaj2015quantum}. Tóth and Vértesi established the existence of positive-partial-transpose (PPT) bound-entangled states that outperform all separable probes for metrology \cite{toth2018quantum}. Pál \emph{et al.} subsequently constructed high-dimensional families of PPT states whose metrological gain over separable probes approaches that of a maximally entangled qubit pair \cite{pal2021bound}. Tóth \emph{et al.}  showed that states exhibiting no metrological advantage in the single-copy setting can become useful after appending uncorrelated ancillary systems or using multiple copies, and proved that all bipartite entangled pure states outperform separable states in metrology \cite{toth2020activating}.

There are two key distinctions between our work and the prior studies~\cite{czekaj2015quantum,toth2018quantum,toth2020activating,pal2021bound}. First, these studies investigate entanglement as a metrological resource, including PPT bound entanglement. By contrast, we treat private correlations as the relevant resource and show that positive private information provides a quantitative certificate of sensing capability. Second, they characterize metrological usefulness by comparing the global quantum Fisher information with the maximum achievable by separable probes, whereas we derive a lower bound on the Fisher information accessible through LOCC in a distributed sensing setting. Conceptually, these earlier works ask whether and to what extent entanglement, including weak forms, can enhance metrology. Our results are instead rooted in the observation that the environment's incomplete knowledge imposes a lower bound on the total off-diagonal coherence. This privacy--coherence connection provides a distinct motivation for our approach and naturally suggests using privacy as an operational certificate of sensing capability.

The role of assistance in our work also differs fundamentally from the activation studied in Ref.~\cite{toth2020activating}. Their activation enlarges the probe by adding ancillary systems or additional copies and permits optimization over new encoding Hamiltonians acting on the enlarged local systems. In our setting, no new ancillary system or additional copy is appended, and the encoding Hamiltonian remains fixed. The assisting channel transfers only an already-present unencoded subsystem that never carries the unknown parameter. As a parameter-independent operation, this transmission cannot increase the global quantum Fisher information. Instead, it changes which party holds the unencoded subsystem and makes the surviving coherence accessible under the LOCC constraint. Our data-hiding example makes this distinction explicit by exhibiting a large separation between the locally accessible Fisher information before and after the transmission of the unencoded shield, even when the assisting channel is noisy.

We also note that one family of PPT states considered in Ref.~\cite{pal2021bound} is constructed from private bits. In that work, however, the private-state structure is used as a construction tool rather than as an operational resource criterion: neither private information nor distillable key is invoked to derive or bound the metrological performance. Consequently, no general bound connecting privacy to sensing capability is established. A separate distinction concerns the readout: the metrological performance of this family is evaluated by directly calculating its global quantum Fisher information, without restricting the readout to LOCC.

Finally, \cite{zhou2020saturating} showed that LOCC can attain the global quantum Fisher information for arbitrary multipartite pure states, and a mixed-state no-go example was also provided. In contrast to this prior qualitative result, our work is quantitative: rather than asking only whether the global quantum Fisher information is attainable,
we derive explicit lower and upper bounds on the Fisher information
accessible under LOCC.

\section{Large gains of Fisher information: general hiding--recovery principle}\label{SI_sec:data_hiding_example}

\subsection{A general hiding--recovery principle}
\label{SI_subsec:hiding_recovery}
In this section, we provide further details of the data-hiding examples,
which demonstrate a gap between the sensing performance achievable by LOCC measurements
with and without an assisting channel.

Let $N\in\mbb N$ and define
\begin{equation}
    \ket{\psi_{N\pm}(\theta)}
    :=
    \frac{1}{\sqrt2}
    \bigl(\ket{N0}\pm e^{i N\theta}\ket{0N}\bigr),
    \qquad
    \psi_{N\pm}(\theta)
    :=
    \ket{\psi_{N\pm}(\theta)}
    \bra{\psi_{N\pm}(\theta)}.
\end{equation}
Given two states $\rho_+^{A'B'}$ and $\rho_-^{A'B'}$, consider
\begin{equation}\label{eq:general-hidden-sensing-state-app}
    \gamma_N^{ABA'B'}(\theta)
    :=
    \frac12\,\psi_{N+}^{AB}(\theta)\otimes\rho_+^{A'B'}
    +
    \frac12\,\psi_{N-}^{AB}(\theta)\otimes\rho_-^{A'B'}.
\end{equation}
The systems $AA'$ are initially held by Alice and $BB'$ by Bob.

Let $\mc E^{A'\to\wt B}$ be a quantum channel and set
\begin{equation}
    \sigma_\pm^{\wt B B'}
    :=
    \bigl(\mc E^{A'\to\wt B}\otimes\mathrm{id}^{B'}\bigr)
    (\rho_\pm^{A'B'}).
\end{equation}
After Alice transmits $A'$ through $\mc E$, the state becomes
\begin{equation}\label{eq:general-noisy-hidden-sensing-state-app}
    \wt\gamma_N^{AB\wt B B'}(\theta)
    :=
    \frac12\,\psi_{N+}^{AB}(\theta)\otimes\sigma_+^{\wt B B'}
    +
    \frac12\,\psi_{N-}^{AB}(\theta)\otimes\sigma_-^{\wt B B'},
\end{equation}
where Alice holds $A$ and Bob holds $B\wt B B'$.

We first present the elementary calculation for certificates of product POVM.

\begin{lemma}
\label{lem:local-two-mode-readout}
For $v\in[-1,1]$, let
\begin{equation}
\begin{aligned}
    \tau_{N,v}^{AB}(\theta)
    :=\frac12\Bigl(&
        \ket{N0}\bra{N0}
        +
        \ket{0N}\bra{0N} +
        v e^{-i N\theta}\ket{N0}\bra{0N}
        +
        v e^{i N\theta}\ket{0N}\bra{N0}
    \Bigr).
\end{aligned}
\end{equation}
At every operating point~$\theta_0$, there is a product POVM across $A:B$
whose classical Fisher information is $N^2v^2$.
\end{lemma}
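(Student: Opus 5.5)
The plan is to write down an explicit product POVM in which Alice and Bob each measure their mode in a basis that interferes the vacuum with the $N$-photon Fock state. We then compute the outcome distribution directly and choose a calibration phase, depending on $\theta_0$, that makes the Fisher information equal to $N^2v^2$.

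\emph{Step 1 (the POVM).} Fix a phase $\alpha$, to be chosen later. Alice uses the projectors onto $\ket{a_s}:=(\ket{0}+s\,e^{i\alpha}\ket{N})/\sqrt2$ for $s=\pm1$, together with the projector onto the orthogonal complement of $\mathrm{span}\{\ket0,\ket N\}$ in her mode. Bob uses the projectors onto $\ket{b_t}:=(\ket0+t\ket N)/\sqrt2$ for $t=\pm1$, together with his complement projector. This is a product POVM across $A:B$ and needs no communication. The state $\tau_{N,v}(\theta)$ is supported on $\mathrm{span}\{\ket{N0},\ket{0N}\}\subset\mathrm{span}\{\ket0,\ket N\}^{\otimes2}$, so the complement outcomes have probability zero for every $\theta$ and do not contribute.

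\emph{Step 2 (probabilities).} The relevant overlaps are
\[
\langle a_s b_t|N0\rangle=\tfrac{s e^{-i\alpha}}{2},\qquad \langle a_s b_t|0N\rangle=\tfrac{t}{2}.
\]
The diagonal terms of $\tau_{N,v}$ contribute $\tfrac12\cdot\tfrac12=\tfrac14$. The coherence terms contribute $\tfrac12\cdot 2\,\mathrm{Re}\bigl(v e^{-iN\theta}\tfrac{s e^{-i\alpha}}{2}\tfrac{t}{2}\bigr)$. Together these give
\[
p(s,t|\theta)=\tfrac14\bigl(1+st\,v\cos(N\theta+\alpha)\bigr).
\]

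\emph{Step 3 (Fisher information).} The derivative is $\partial_\theta p=-\tfrac14 st\,Nv\sin\varphi$, where $\varphi:=N\theta+\alpha$. Summing $(\partial_\theta p)^2/p$ over the four outcomes, two with $st=+1$ and two with $st=-1$, gives
\[
F=\frac{N^2v^2\sin^2\varphi}{4}\left(\frac{2}{1+v\cos\varphi}+\frac{2}{1-v\cos\varphi}\right)=\frac{N^2v^2\sin^2\varphi}{1-v^2\cos^2\varphi}.
\]
Calibrating $\alpha=\pi/2-N\theta_0$ makes $\varphi=\pi/2$ at $\theta=\theta_0$, and hence $F=N^2v^2$.

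The only delicate point is the edge case $|v|=1$, where some probabilities could vanish away from the operating point. At $\varphi=\pi/2$, however, all four probabilities equal $1/4>0$, so no regularization convention is needed. The computation is otherwise routine.
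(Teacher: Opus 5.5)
Your proposal is correct and is essentially the paper's proof. The paper uses the bases $(\ket{0}\pm e^{i\alpha}\ket{N})/\sqrt2$ and $(\ket{0}\pm e^{i\beta}\ket{N})/\sqrt2$, obtains $p(s,t|\theta)=\frac14[1+st\,v\cos(N\theta+\alpha-\beta)]$, and calibrates $\alpha-\beta=\frac{\pi}{2}-N\theta_0$; your choice of Bob's phase equal to zero is just the special case $\beta=0$. Your closed form $N^2v^2\sin^2\varphi/(1-v^2\cos^2\varphi)$ for general $\varphi$ is a correct extra that the paper does not state.
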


\begin{proof}
For $s,t\in\{+1,-1\}$, define
\begin{equation}
    \ket{s_\alpha}
    :=
    \frac{1}{\sqrt2}\bigl(\ket0+s e^{i\alpha}\ket N\bigr),
    \qquad
    \ket{t_\beta}
    :=
    \frac{1}{\sqrt2}\bigl(\ket0+t e^{i\beta}\ket N\bigr).
\end{equation}
The corresponding product measurement gives
\begin{equation}
    p(s,t|\theta)
    =
    \frac14\Bigl[1+st\,v\cos(N\theta+\alpha-\beta)\Bigr].
\end{equation}
The measurement can be completed arbitrarily on the orthogonal complement of
$\operatorname{span}\{\ket0,\ket N\}$.
Choose $\alpha-\beta=\frac\pi2-N\theta_0$ and hold $\alpha,\beta$ fixed.
At $\theta=\theta_0$, all four probabilities equal $1/4$ and
\begin{equation}
    \left.\partial_\theta p(s,t|\theta)\right|_{\theta=\theta_0}
    =
    -\frac14st\,vN.
\end{equation}
The resulting classical Fisher information is therefore $N^2v^2$.
\end{proof}

We use the distinguishability norm induced by PPT measurements.  Denote
by $\mathrm{PPT}(A':B')$ the set of POVMs $\{L_z\}_z$ satisfying
\begin{equation}
    L_z\ge0,
    \qquad
    L_z^{T_{A'}}\ge0,
    \qquad
    \sum_zL_z=I^{A'B'}.
\end{equation}
For a Hermitian operator $X$ on $A'B'$, define
\begin{equation}\label{eq:PPT-distinguishability-norm}
    \|X\|_{\mathrm{PPT}(A':B')}
    :=
    \sup_{\{L_z\}_z\in\mathrm{PPT}(A':B')}
    \sum_z\bigl|\Tr(L_zX)\bigr|.
\end{equation}

\begin{thm}[General hiding--recovery principle]
\label{thm:hiding-recovery}
Let $\gamma_N(\theta)$ and $\wt\gamma_N(\theta)$ be defined by
\eqref{eq:general-hidden-sensing-state-app} and
\eqref{eq:general-noisy-hidden-sensing-state-app}.  Then, for every operating
point~$\theta_0$, the following statements hold.
\begin{enumerate}
    \item[\rm(i)] Before using the channel $\mc E$,
    \begin{equation}\label{eq:general-upper}
        \mc F_{\mathrm{LOCC}}^{(AA':BB')}
        (\gamma_N(\theta_0))
        \le
        N^2\|\rho_+-\rho_-\|_{\mathrm{PPT}(A':B')}.
    \end{equation}

    \item[\rm(ii)] Let $\{M_y\}_y$ be any POVM on $\wt B B'$ and define
    \begin{equation}
        a_y:=\Tr(M_y\sigma_+),
        \qquad
        b_y:=\Tr(M_y\sigma_-).
    \end{equation}
    After using the channel $\mc E$,
    \begin{equation}\label{eq:general-locc-lb}
        \mc F_{\mathrm{LOCC}}^{(A:B\wt B B')}
        (\wt\gamma_N(\theta_0))
        \ge
        N^2
        \sum_{y:\,a_y+b_y>0}
        \frac{(a_y-b_y)^2}{2(a_y+b_y)}.
    \end{equation}
\end{enumerate}
\end{thm}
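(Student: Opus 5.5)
Part (ii) is constructive, and part (i) is a relaxation argument. The plan for each is as follows.

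\emph{Part (ii).} Bob first measures $\{M_y\}$ on $\wt B B'$ and obtains outcome $y$. Since
\[
\wt\gamma_N=\tfrac12\psi_{N+}\otimes\sigma_++\tfrac12\psi_{N-}\otimes\sigma_-,
\]
outcome $y$ occurs with probability $q_y=(a_y+b_y)/2$, independent of $\theta$. The post-measurement state on $AB$ is the mixture
\[
\frac{a_y\psi_{N+}+b_y\psi_{N-}}{a_y+b_y}.
\]
This equals exactly $\tau_{N,v_y}^{AB}(\theta)$ with $v_y=(a_y-b_y)/(a_y+b_y)$, up to the sign convention of the phase, which can be absorbed by relabeling $\theta\to-\theta$ or adjusting $\alpha-\beta$. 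Bob then communicates $y$ classically to Alice. Both parties apply the product POVM of Lemma~\ref{lem:local-two-mode-readout}, calibrated at $\theta_0$. Conditioned on $y$, this readout yields Fisher information $N^2v_y^2$.

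Because the distribution of $y$ does not depend on $\theta$, the Fisher information of the joint outcome $(y,s,t)$ is
\[
\sum_y q_yN^2v_y^2=N^2\sum_y\frac{(a_y-b_y)^2}{2(a_y+b_y)}.
\]
This adaptive protocol is one-way LOCC, so the bound in \eqref{eq:general-locc-lb} follows.

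\emph{Part (i).} I would use LOCC $\subseteq$ PPT for the bipartition $AA':BB'$. Fix any LOCC POVM $\{E_z\}$. Then $E_z\ge0$ and $E_z^{T_{AA'}}\ge0$.

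Next, write the state in the $\{N0,0N\}$ basis:
\[
\gamma_N(\theta)=\tfrac12\bigl(|N0\rangle\langle N0|+|0N\rangle\langle 0N|\bigr)\otimes\bar\rho+\tfrac12\bigl(e^{-iN\theta}|N0\rangle\langle0N|+\text{h.c.}\bigr)\otimes\Delta,
\]
where $\bar\rho=(\rho_++\rho_-)/2$ and $\Delta=(\rho_+-\rho_-)/2$. Only the off-diagonal part depends on $\theta$. Hence
\[
\partial_\theta p_z=\mathrm{Re}\bigl[-iNe^{-iN\theta}\Tr\bigl(\langle0N|E_z|N0\rangle\Delta\bigr)\bigr]
\]
up to a factor.

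Define the operator on $A'B'$
\[
G_z=\langle 0N|E_z|N0\rangle,
\]
where the bra and ket act on $AB$. The key structural point is that $|N0\rangle$ and $|0N\rangle$ are product vectors with swapped local components. Therefore partial transposition on $A$ maps the $AB$-coherence $|N0\rangle\langle0N|$ into $|00\rangle\langle NN|$-type terms.

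From positivity of $E_z$ (by Cauchy--Schwarz on its $2\times2$ block) and of $E_z^{T_{AA'}}$, I would show that the operators
\[
L_z^\pm:=\tfrac12\bigl(\langle N0|E_z|N0\rangle+\langle 0N|E_z|0N\rangle\bigr)\pm\mathrm{Re}(e^{i\phi}G_z)
\]
are positive with positive partial transposes on $A'$. I would also show that they sum to the identity on $A'B'$ when summed over $z$ and $\pm$.

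The Fisher information is $\sum_z(\partial p_z)^2/p_z$. I would bound this by $N^2\sum_z|\Tr(\cdot\,\Delta)|\cdot(\text{ratio}\le1)$. The ratio bound uses $|\partial p_z|\le Np_z$, which follows from positivity of the $2\times2$ block. The result is $(\partial p_z)^2/p_z\le N|\partial p_z|$.

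Then $\sum_z|\partial p_z|$ is at most $N\sum_z|\Tr(K_z(\rho_+-\rho_-))|$ for PPT-measurement-like operators $K_z$. This gives $N^2\|\rho_+-\rho_-\|_{\mathrm{PPT}}$ after checking the normalization, absorbing the factors $\tfrac12$.

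\emph{Main obstacle.} The hard part is this last step: verifying that the constructed $K_z$ (the dephased diagonal part plus or minus the rotated coherence) form a valid PPT POVM on $A'B'$. This requires using both $E_z\ge0$ and $E_z^{T_{AA'}}\ge0$ on the $\{|N0\rangle,|0N\rangle\}$ and $\{|00\rangle,|NN\rangle\}$ subspaces. I would first try the approach of Horodecki et al.\ for bounding the distinguishability of private states with data-hiding shields, which handles exactly this twirling argument.
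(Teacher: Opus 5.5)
Your part (ii) is correct and is the paper's argument. The message $y$ is not even needed: the readout of Lemma~\ref{lem:local-two-mode-readout} does not depend on $v_y$, so the protocol is a product POVM.

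Part (i) has two steps that fail. The smaller one is the per-outcome bound $|\partial_\theta p_z|\le Np_z$, which is false for outcomes near a fringe minimum. Take the product readout of Lemma~\ref{lem:local-two-mode-readout} with $v=1$, but with the phase offset chosen so that one outcome has $p=\tfrac14(1-\cos\epsilon)$. Then $|\dot p|=\tfrac N4\sin\epsilon\gg Np$ as $\epsilon\to0$, so $(\partial p_z)^2/p_z\le N|\partial p_z|$ is not available. What does hold is the following. Set $T_z:=\langle N0|E_z|0N\rangle$, $t_z:=\Tr(\Delta T_z)$ and $z_z:=e^{iN\theta_0}t_z$. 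Then $p_z=D_z+\Re z_z$ with $D_z\ge|t_z|$, because $p_z(\theta)\ge0$ for all $\theta$. Hence $\dot p_z^2/p_z=N^2(\Im z_z)^2/p_z\le N^2(|t_z|-\Re z_z)$. The $\Re z_z$ terms drop out only after summing over $z$, because completeness gives $\sum_z T_z=\langle N0|0N\rangle I=0$. This yields $F\le N^2\sum_z|t_z|$.

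The genuine gap is the PPT POVM. Your $L_z^\pm$ are exactly $\langle v_\pm|E_z|v_\pm\rangle$ with $|v_\pm\rangle=(|N0\rangle\pm e^{-i\phi}|0N\rangle)/\sqrt2$. That is a compression of $E_z$ by an \emph{entangled} vector of the key systems, and such a compression swaps entanglement into the shield.

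Concretely, take $\ket{\psi}^{AA'}=\ket{\chi}^{BB'}=(\ket{0}\ket{0}+\ket{N}\ket{1})/\sqrt2$ and the one-way LOCC POVM $\{E_1,I-E_1\}$ with $E_1=\ket{\psi}\bra{\psi}\otimes\ket{\chi}\bra{\chi}$. Then $(\bra{v_+}\otimes I)\ket{\psi}\ket{\chi}=\tfrac{1}{2\sqrt2}(\ket{10}+e^{i\phi}\ket{01})^{A'B'}$, so $L_1^+$ is NPT. Invoking $E_z^{T_{AA'}}\ge0$ cannot repair this, since $(L_z^\pm)^{T_{A'}}=\Tr_{AB}[(|v_\pm\rangle\langle v_\pm|^{T_A}\otimes I)E_z^{T_{AA'}}]$ and $|v_\pm\rangle\langle v_\pm|^{T_A}$ is not positive.

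The missing idea is to compress only with \emph{product} vectors: $H^\pm_{z,\beta}:=\langle a_\pm\otimes b_\beta|E_z|a_\pm\otimes b_\beta\rangle$ with $\ket{a_\pm}=(\ket{N}\pm e^{-i(\phi_z-\beta)}\ket{0})/\sqrt2$ and $\ket{b_\beta}=(\ket{0}+e^{-i\beta}\ket{N})/\sqrt2$. These operators are PPT, and $\{\tfrac12 H^\pm_{z,\beta}\}_{z,\pm}$ is a PPT POVM on $A'B'$ for every $\beta$. A single product compression does not isolate the coherence, but its zero Fourier mode in $\beta$ does: $\int_0^{2\pi}(H^+_{z,\beta}-H^-_{z,\beta})\frac{d\beta}{2\pi}=\tfrac12(e^{-i\phi_z}T_z+e^{i\phi_z}T_z^\dagger)$. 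Its trace against $\Delta$ equals $|t_z|$ for $\phi_z=\arg t_z$. Averaging the PPT-norm bound over $\beta$ then gives $\sum_z|t_z|\le2\|\Delta\|_{\mathrm{PPT}}=\|\rho_+-\rho_-\|_{\mathrm{PPT}}$. The factor $\tfrac12$ lost to the phase average is exactly why the constant is $\|\rho_+-\rho_-\|_{\mathrm{PPT}}$ rather than $\|\Delta\|_{\mathrm{PPT}}$.
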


The recovery protocol in part~\rm(ii) can in fact be implemented by a product
POVM across $A:B\wt B B'$.  Thus the lower bound does not require an
interactive LOCC protocol once the system $A'$ has been transmitted.

\begin{corollary}[Trace-norm recovery bound]
\label{cor:trace-distance}
For every operating point~$\theta_0$,
\begin{equation}\label{eq:trace-distance-lb}
    \mc F_{\mathrm{LOCC}}^{(A:B\wt B B')}
    (\wt\gamma_N(\theta_0))
    \ge
    \frac{N^2}{4}\|\sigma_+-\sigma_-\|_1^2.
\end{equation}
\end{corollary}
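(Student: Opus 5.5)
The plan is to obtain the corollary from Theorem~\ref{thm:hiding-recovery}(ii) by choosing the measurement $\{M_y\}$ on $\wt B B'$ to be the Helstrom measurement. We then lower bound the resulting sum by a squared trace norm using Cauchy--Schwarz.

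First, take the two-outcome projective measurement $M_0=P$ and $M_1=I-P$. Here $P$ is the projector onto the nonnegative eigenspace of $\sigma_+-\sigma_-$. Then
\[
a_0-b_0=\Tr\bigl(P(\sigma_+-\sigma_-)\bigr)=\tfrac12\|\sigma_+-\sigma_-\|_1 ,
\]
since $\Tr(\sigma_+-\sigma_-)=0$. Likewise $a_1-b_1=-\tfrac12\|\sigma_+-\sigma_-\|_1$. Hence
\[
\sum_y|a_y-b_y|=\|\sigma_+-\sigma_-\|_1 .
\]

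Second, bound the quantity in \eqref{eq:general-locc-lb} from below. Apply the Cauchy--Schwarz inequality in the form
\[
\Bigl(\sum_y|a_y-b_y|\Bigr)^2\le\Bigl(\sum_y\frac{(a_y-b_y)^2}{a_y+b_y}\Bigr)\Bigl(\sum_y(a_y+b_y)\Bigr),
\]
with both sums restricted to $a_y+b_y>0$. Outcomes with $a_y+b_y=0$ have $a_y=b_y=0$, so the restriction loses nothing. Since $\sum_y(a_y+b_y)=2$, we get
\[
\sum_y\frac{(a_y-b_y)^2}{2(a_y+b_y)}\ge\frac14\Bigl(\sum_y|a_y-b_y|\Bigr)^2=\frac14\|\sigma_+-\sigma_-\|_1^2 .
\]
Multiplying by $N^2$ and invoking Theorem~\ref{thm:hiding-recovery}(ii) gives \eqref{eq:trace-distance-lb}.

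For the two-outcome Helstrom measurement, this bound can also be checked by direct computation. Write $D=\tfrac12\|\sigma_+-\sigma_-\|_1$. Then $a_0+b_0=s$ and $a_1+b_1=2-s$ for some $s\in[0,2]$, and the sum equals
\[
\frac{D^2}{2}\Bigl(\frac1s+\frac1{2-s}\Bigr)\ge D^2 ,
\]
with the minimum attained at $s=1$. This agrees with the Cauchy--Schwarz route.

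No step here is a real obstacle. The only point needing care is the degenerate case in which one Helstrom outcome has zero probability under both states. That case forces $\sigma_+=\sigma_-$, so the bound is trivially zero.
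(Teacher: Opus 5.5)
Your proposal is correct and follows essentially the same route as the paper: both apply Cauchy--Schwarz with $\sum_y(a_y+b_y)=2$ to bound the sum in Theorem~\ref{thm:hiding-recovery}(ii) from below by $\tfrac14\bigl(\sum_y|a_y-b_y|\bigr)^2$. Both then choose the Helstrom measurement so that $\sum_y|a_y-b_y|=\|\sigma_+-\sigma_-\|_1$. Your direct two-outcome check and your handling of the degenerate case are consistent with this.
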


Theorem~\ref{thm:hiding-recovery} isolates the two ingredients of the
phenomenon.  Before transmission, the LOCC Fisher information is controlled by
the PPT distinguishability of the shield pair $(\rho_+,\rho_-)$.  After
transmission, the same phase coherence can be read out locally to the extent
that $(\sigma_+,\sigma_-)$ can be distinguished on Bob's side.  The parameter
$N$ fixes the absolute $N^2$ sensing scale, whereas the shield dimension will
control the relative enhancement.

\subsection{Concrete realization using Werner hiding states}
\label{SI_subsec:Werner_hiding}

Let $d\ge2$, let $S$ denote the swap operator on
$\mbb C^d\otimes\mbb C^d$, and set
\begin{equation}
    \Pi_+:=\frac{I+S}{2},
    \qquad
    \Pi_-:=\frac{I-S}{2}.
\end{equation}
The normalized symmetric and antisymmetric Werner states are
\begin{equation}\label{eq:Werner-hiding-pair}
    \rho_+
    :=
    \frac{\Pi_+}{\Tr(\Pi_+)}
    =
    \frac{I+S}{d(d+1)},
    \qquad
    \rho_-
    :=
    \frac{\Pi_-}{\Tr(\Pi_-)}
    =
    \frac{I-S}{d(d-1)}.
\end{equation}

\begin{lemma}[PPT distinguishability of the Werner pair]
\label{lem:Werner-PPT-norm}
For the states in~\eqref{eq:Werner-hiding-pair},
\begin{equation}
    \|\rho_+-\rho_-\|_{\mathrm{PPT}}
    =
    \frac{4}{d+1}.
\end{equation}
\end{lemma}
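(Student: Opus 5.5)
Since $\|\cdot\|_{\mathrm{PPT}}$ is defined as a supremum over PPT POVMs, I will prove the equality by establishing a matching upper and lower bound.

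Both bounds rest on the twirling symmetry of the Werner pair. The operator $X:=\rho_+-\rho_-$ is invariant under $U\otimes U$ conjugation. The PPT constraints ($L_z\ge0$, $L_z^{T_{A'}}\ge0$, $\sum_z L_z=I$) are preserved by the twirl $\mathcal T(L)=\int dU\,(U\otimes U)L(U\otimes U)^\dagger$: the partial transpose of the twirled operator is the $U^*\otimes U$ twirl of $L^{T_{A'}}$, which is still positive. Since $\Tr(L_zX)=\Tr(\mathcal T(L_z)X)$, we may assume without loss of generality that every $L_z=\alpha_z\Pi_++\beta_z\Pi_-$, or equivalently $L_z=c_zI+e_zS$.

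Next I impose the PPT constraints in these coordinates. Positivity requires $\alpha_z,\beta_z\ge0$, and completeness requires $\sum_z\alpha_z=\sum_z\beta_z=1$. For the partial transpose, $L_z^{T_{A'}}=c_zI+e_z\,d\,\Phi$, where $\Phi$ is the projector onto the maximally entangled state. Its eigenvalues are $c_z$ and $c_z+d e_z$, with $c_z=(\alpha_z+\beta_z)/2$ and $e_z=(\alpha_z-\beta_z)/2$. The PPT condition is therefore
\begin{equation}
(\alpha_z+\beta_z)+d(\alpha_z-\beta_z)\ge0,
\end{equation}
that is, $\beta_z(d-1)\le\alpha_z(d+1)$. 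The objective is $\sum_z|\Tr(L_zX)|=\sum_z|\alpha_z-\beta_z|$, because $\Tr(\Pi_\pm\rho_\pm)=1$ and the cross terms vanish. Write $Z_\pm$ for the sets of outcomes with $\alpha_z\ge\beta_z$ and with $\alpha_z<\beta_z$, respectively. Completeness gives $\sum_z(\alpha_z-\beta_z)=0$, so the objective equals $2\sum_{z\in Z_-}(\beta_z-\alpha_z)$. On $Z_-$ the PPT condition gives $\beta_z-\alpha_z\le\frac{2}{d-1}\alpha_z$. On the other hand, $\sum_{Z_-}\beta_z\le1$ together with $\alpha_z\ge\beta_z(d-1)/(d+1)$ gives $\beta_z-\alpha_z\le\frac{2}{d+1}\beta_z$. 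Summing the second bound over $Z_-$ yields
\begin{equation}
\sum_z|\alpha_z-\beta_z|\le 2\cdot\frac{2}{d+1}\sum_{Z_-}\beta_z\le\frac{4}{d+1}.
\end{equation}
Choosing the correct one of these two bounds is where care is needed; the $\alpha$-based bound would give the weaker constant $4/(d-1)$.

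For achievability, I take the two-outcome POVM with $L_1=\Pi_-+\frac{d-1}{d+1}\Pi_+$ and $L_2=\frac{2}{d+1}\Pi_+$. Both operators are positive, $L_2$ is separable, and $L_1$ saturates the PPT inequality with equality, so $L_1^{T_{A'}}\ge0$ can be checked directly from the eigenvalue formula above. Its value is
\begin{equation}
\Bigl|1-\tfrac{d-1}{d+1}\Bigr|+\tfrac{2}{d+1}=\tfrac{4}{d+1}.
\end{equation}
The upper and lower bounds coincide, which completes the proof. The main obstacle is the first step: justifying that twirling preserves PPT-ness and leaves the value of $\sum_z|\Tr(L_zX)|$ unchanged. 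The invariance of the value follows from the $U\otimes U$ invariance of $X$; the preservation of PPT-ness is the partial-transpose identity noted above.
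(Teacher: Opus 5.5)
Your proof is correct and follows the paper's argument. You twirl the POVM elements into the span of $\Pi_\pm$, use $S^{T_{A'}}=d\phi_d$ to turn the PPT condition into a linear constraint, and attain $4/(d+1)$ with the same measurement $\{\tfrac{2}{d+1}\Pi_+,\,I-\tfrac{2}{d+1}\Pi_+\}$. The only cosmetic difference is that the paper first coarse-grains to a two-outcome POVM $\{E,I-E\}$ and applies the PPT condition to $I-E$, whereas you bound the multi-outcome sum directly through the $Z_\pm$ split; the key inequality is the same in both.
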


\begin{proof}
Set $\delta:=\rho_+-\rho_-$.  Since $\Tr(\delta)=0$, any PPT POVM can be
coarse-grained according to the sign of $\Tr(L_z\delta)$, and hence
\begin{equation}
    \|\delta\|_{\mathrm{PPT}}
    =
    2\sup_E\bigl|\Tr(E\delta)\bigr|,
\end{equation}
where both $E$ and $I-E$ are PPT.  Twirling $E$ by $U\otimes U$ preserves
these constraints and its expectation against $\delta$, so we may write
\begin{equation}
    E=x\Pi_+ + y\Pi_-,
    \qquad
    0\le x,y\le1.
\end{equation}
Replacing $E$ by $I-E$ if necessary, assume $x\ge y$, and write
\begin{equation}
    c:=\frac{x+y}{2},
    \qquad
    h:=\frac{x-y}{2}\ge0.
\end{equation}
Let $\phi_d:=\ket{\phi_d}\bra{\phi_d}$, where
$\ket{\phi_d}=d^{-1/2}\sum_{i=1}^d\ket{ii}$.  Since
$S^{T_{A'}}=d\phi_d$, the PPT constraints give
\begin{equation}
    0\le c+dh\le1.
\end{equation}
Since $y=c-h\ge0$, we have $c\ge h$ and therefore
$(d+1)h\le1$.  Consequently,
\begin{equation}
    \bigl|\Tr(E\delta)\bigr|
    =|x-y|
    =2h
    \le
    \frac{2}{d+1}.
\end{equation}
Equality is attained by $E=\frac{2}{d+1}\Pi_+$, which proves the claim.
\end{proof}

\begin{corollary}[Werner hiding bound]
\label{cor:werner-upper}
For the Werner pair in~\eqref{eq:Werner-hiding-pair},
\begin{equation}\label{eq:Werner-Fisher-upper}
    \mc F_{\mathrm{LOCC}}^{(AA':BB')}
    (\gamma_N(\theta_0))
    \le
    \frac{4N^2}{d+1}.
\end{equation}
\end{corollary}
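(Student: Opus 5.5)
This corollary follows immediately from part (i) of Theorem~\ref{thm:hiding-recovery} combined with Lemma~\ref{lem:Werner-PPT-norm}. The plan is to specialize the general hiding bound \eqref{eq:general-upper}, which holds for an arbitrary pair of shield states $(\rho_+,\rho_-)$, to the Werner pair \eqref{eq:Werner-hiding-pair}. No new estimate is required; we only check that the hypotheses of Theorem~\ref{thm:hiding-recovery}(i) are met by this choice.

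First, we confirm that the Werner states \eqref{eq:Werner-hiding-pair} are legitimate density operators on $A'B'$. They are the normalized projectors $\Pi_\pm/\Tr(\Pi_\pm)$, with $\Tr(\Pi_+)=d(d+1)/2$ and $\Tr(\Pi_-)=d(d-1)/2$. Substituting them into \eqref{eq:general-hidden-sensing-state-app} therefore reproduces exactly the state $\gamma_N$ of \eqref{eq:general-hidden-sensing-state}. Theorem~\ref{thm:hiding-recovery}(i) then gives, at every operating point $\theta_0$,
\begin{equation*}
\mc F_{\mathrm{LOCC}}^{(AA':BB')}(\gamma_N(\theta_0))\le N^2\|\rho_+-\rho_-\|_{\mathrm{PPT}(A':B')}.
\end{equation*}

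Second, we insert Lemma~\ref{lem:Werner-PPT-norm}, which evaluates $\|\rho_+-\rho_-\|_{\mathrm{PPT}}=4/(d+1)$. Multiplying by $N^2$ yields \eqref{eq:Werner-Fisher-upper}.

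The only point needing care is the direction in which the PPT norm is used. Every LOCC POVM across $AA':BB'$ is separable, and hence PPT. Consequently, any bound controlled by the PPT distinguishability norm also dominates the LOCC Fisher information. This relaxation is already built into the proof of Theorem~\ref{thm:hiding-recovery}(i), so the corollary requires no further argument and presents no genuine obstacle.
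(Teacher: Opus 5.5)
Your proposal is correct and matches the paper's argument: the corollary is obtained by substituting $\|\rho_+-\rho_-\|_{\mathrm{PPT}}=4/(d+1)$ from Lemma~\ref{lem:Werner-PPT-norm} into the general bound of Theorem~\ref{thm:hiding-recovery}(i), with no further estimate needed. Your remark that the LOCC-to-PPT relaxation is already built into that theorem is also accurate; there it enters through the separability, hence PPT property, of the sandwiched effects $H_{y,\beta}^{\pm}$ on $A':B'$.
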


We next show that simple noisy channels acting on Alice's shield recover
Fisher information of order $N^2$.

\begin{proposition}[Recovery through erasure and depolarizing channels]
\label{prop:erasure-depolarizing-recovery}
For the Werner pair in~\eqref{eq:Werner-hiding-pair}, the following statements
hold for every operating point~$\theta_0$.
\begin{enumerate}
    \item[\rm(i)] Let $\mc E_\lambda$ be the erasure channel
    \begin{equation}
        \mc E_\lambda(X)
        :=
        (1-\lambda)X
        +
        \lambda\Tr(X)\ket e\bra e,
        \qquad 0\le\lambda\le1,
    \end{equation}
    where $\ket e$ is orthogonal to the input space.  Then
    \begin{equation}\label{eq:erasure-recovery}
        \mc F_{\mathrm{LOCC}}^{(A:B\wt B B')}
        (\wt\gamma_N(\theta_0))
        \ge
        (1-\lambda)N^2.
    \end{equation}

    \item[\rm(ii)] Let $\mc D_p$ be the depolarizing channel
    \begin{equation}
        \mc D_p(X)
        :=
        (1-p)X+\frac{p}{d}\Tr(X)I,
        \qquad 0\le p\le1.
    \end{equation}
    Then
    \begin{equation}\label{eq:depolarizing-recovery}
        \mc F_{\mathrm{LOCC}}^{(A:B\wt B B')}
        (\wt\gamma_N(\theta_0))
        \ge
        \frac{(1-p)^2}{1-p^2/d^2}\,N^2.
    \end{equation}
\end{enumerate}
\end{proposition}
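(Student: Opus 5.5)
We plan to deduce both parts from Theorem~\ref{thm:hiding-recovery}(ii). Concretely, for each channel we choose a POVM $\{M_y\}$ on Bob's shield systems $\wt B B'$, compute the outcome probabilities $a_y=\Tr(M_y\sigma_\pm)$ with $\sigma_\pm=(\mc E\otimes\mathrm{id})(\rho_\pm)$, and evaluate the sum $\sum_y (a_y-b_y)^2/(2(a_y+b_y))$. Because the Werner states are built from the projectors $\Pi_\pm$, the natural measurement is the symmetric/antisymmetric projective measurement on $\wt B B'$, supplemented by an erasure flag where relevant.

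\textbf{(i) Erasure channel.} The output is
\[
\sigma_\pm=(1-\lambda)\rho_\pm+\lambda\,\ket e\bra e\otimes\rho_\pm^{B'},
\]
and both marginals are $\rho_\pm^{B'}=I/d$. We take the POVM $\{\Pi_+,\Pi_-,\ket e\bra e\otimes I\}$, where $\Pi_\pm$ acts on the non-erased subspace. This gives $a=(1-\lambda,0,\lambda)$ and $b=(0,1-\lambda,\lambda)$. The erasure outcome contributes zero, and each of the other two outcomes contributes $(1-\lambda)^2/(2(1-\lambda))=(1-\lambda)/2$. The sum is $1-\lambda$, which yields \eqref{eq:erasure-recovery}.

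\textbf{(ii) Depolarizing channel.} Here
\[
\sigma_\pm=(1-p)\rho_\pm+\frac pd\, I\otimes\tfrac{I}{d}=(1-p)\rho_\pm+p\,\frac{I}{d^2}.
\]
Measuring $\{\Pi_+,\Pi_-\}$ and writing $r_\pm:=\Tr\Pi_\pm/d^2=(d\pm1)/(2d)$, we get $a=\bigl((1-p)+pr_+,\;pr_-\bigr)$ and $b=\bigl(pr_+,\;(1-p)+pr_-\bigr)$. Both outcomes have $|a_y-b_y|=1-p$, so
\[
\sum_y\frac{(a_y-b_y)^2}{2(a_y+b_y)}=\frac{(1-p)^2}{2}\Bigl(\frac{1}{(1-p)+2pr_+}+\frac{1}{(1-p)+2pr_-}\Bigr).
\]
Using $2pr_\pm=p\pm p/d$, the denominators become $1\pm p/d$. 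Their reciprocals sum to $2/(1-p^2/d^2)$, so the total is $(1-p)^2/(1-p^2/d^2)$, which gives \eqref{eq:depolarizing-recovery}.

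There is no genuine obstacle. The only care needed is bookkeeping in part (i): embedding $\Pi_\pm$ in the enlarged space that contains the flag $\ket e$, and checking that the Bob-side marginal is parameter-independent. Both are immediate, because $\rho_\pm$ are $U\otimes U$-invariant with maximally mixed marginals. Corollary~\ref{cor:trace-distance} would give only weaker constants, so the direct evaluation of the Theorem~\ref{thm:hiding-recovery}(ii) sum is the route to use.
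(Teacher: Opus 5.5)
Your proposal is correct and follows the paper's proof essentially verbatim. Both parts come from plugging the symmetric/antisymmetric measurement on $\wt B B'$ (plus the erasure flag in part (i)) into the bound of Theorem~\ref{thm:hiding-recovery}(ii), with the same values as the paper: $a=(1-\lambda,0,\lambda)$, $b=(0,1-\lambda,\lambda)$ for erasure, and $|a_\pm-b_\pm|=1-p$, $a_\pm+b_\pm=1\pm p/d$ for depolarizing.
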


\begin{proof}
For the erasure channel, Bob uses the symmetric, antisymmetric, and erasure
outcomes.  Since both Werner states have marginal $I/d$, the corresponding
probabilities are
\begin{equation}
    a=(1-\lambda,0,\lambda),
    \qquad
    b=(0,1-\lambda,\lambda).
\end{equation}
Substitution into~\eqref{eq:general-locc-lb} gives
\eqref{eq:erasure-recovery}.

For the depolarizing channel,
\begin{equation}
    \sigma_\pm
    =
    (1-p)\rho_\pm+\frac{p}{d^2}I^{\wt B B'}.
\end{equation}
For the measurement $\{\Pi_+,\Pi_-\}$, the outcome probabilities satisfy
\begin{equation}
\begin{aligned}
    a_+-b_+&=1-p,
    &\qquad
    a_++b_+&=1+\frac pd,\\
    a_--b_-&=-(1-p),
    &
    a_-+b_-&=1-\frac pd.
\end{aligned}
\end{equation}
Substituting these expressions into~\eqref{eq:general-locc-lb} gives
\eqref{eq:depolarizing-recovery}.
\end{proof}

Combining Corollary~\ref{cor:werner-upper} with
Proposition~\ref{prop:erasure-depolarizing-recovery}, we obtain
\begin{equation}
    \frac{
        \mc F_{\mathrm{LOCC}}^{(A:B\wt B B')}
        (\wt\gamma_N(\theta_0))
    }{
        \mc F_{\mathrm{LOCC}}^{(AA':BB')}
        (\gamma_N(\theta_0))
    }
    \ge
    \frac{(1-\lambda)(d+1)}{4}
\end{equation}
for the erasure channel, and
\begin{equation}
    \frac{
        \mc F_{\mathrm{LOCC}}^{(A:B\wt B B')}
        (\wt\gamma_N(\theta_0))
    }{
        \mc F_{\mathrm{LOCC}}^{(AA':BB')}
        (\gamma_N(\theta_0))
    }
    \ge
    \frac{(d+1)(1-p)^2}{4(1-p^2/d^2)}
\end{equation}
for the depolarizing channel.  Thus, for every fixed $\lambda<1$ or $p<1$,
the relative enhancement grows linearly in $d$, while the recovered Fisher
information remains of order $N^2$.

\subsection{Proof of Theorem~\ref{thm:hiding-recovery}}
\label{SI_subsec:proof_hiding_recovery}

\begin{proof}[Proof of Theorem~\ref{thm:hiding-recovery}]
We prove the two statements separately.

\smallskip
\noindent\emph{Proof of~\rm(i).}
Set
\begin{equation}
    \omega:=\frac{\rho_++\rho_-}{2},
    \qquad
    \Delta:=\frac{\rho_+-\rho_-}{2}.
\end{equation}
Then
\begin{equation}
\begin{aligned}
    \gamma_N(\theta)
    ={}&
    \frac12
    \bigl(\ket{N0}\bra{N0}+\ket{0N}\bra{0N}\bigr)
    \otimes\omega \\
    &+
    \frac12
    \bigl(
        e^{-i N\theta}\ket{N0}\bra{0N}
        +
        e^{i N\theta}\ket{0N}\bra{N0}
    \bigr)
    \otimes\Delta.
\end{aligned}
\end{equation}

Let $\{M_y\}_y$ be an LOCC POVM across $AA':BB'$, and define the operators on
$A'B'$
\begin{equation}
    T_y:=\langle N0|M_y|0N\rangle,
    \qquad
    R_y:=\frac12\bigl(
        \langle N0|M_y|N0\rangle
        +
        \langle0N|M_y|0N\rangle
    \bigr).
\end{equation}
Write
\begin{equation}
    D_y:=\Tr(\omega R_y),
    \qquad
    t_y:=\Tr(\Delta T_y).
\end{equation}
The outcome probability is
\begin{equation}
    P(y|\theta)
    =
    D_y+\Re\bigl(e^{i N\theta}t_y\bigr).
\end{equation}
Since this expression is nonnegative for every~$\theta$, we have
$D_y\ge|t_y|$.

Fix~$\theta_0$ and set $z_y:=e^{i N\theta_0}t_y$.  For every outcome of
positive probability,
\begin{equation}\label{eq:FI-t-bound-main-proof}
\begin{aligned}
    \frac{
        \left.\bigl(\partial_\theta P(y|\theta)\bigr)^2
        \right|_{\theta=\theta_0}
    }{P(y|\theta_0)}
    & =
    N^2\frac{(\Im z_y)^2}{D_y+\Re z_y} \\
    & \le
    N^2\frac{(\Im z_y)^2}{|t_y|+\Re z_y} \\
    & =
    N^2\bigl(|t_y|-\Re z_y\bigr).
\end{aligned}
\end{equation}
The zero-denominator case is understood by continuity.  Moreover,
\begin{equation}
    \sum_y t_y
    =
    \Tr\left(
        \Delta \langle N0|\sum_yM_y|0N\rangle
    \right)
    =0.
\end{equation}
Summing~\eqref{eq:FI-t-bound-main-proof} over~$y$ therefore gives
\begin{equation}\label{eq:FI-sum-t}
    F_{\mathrm{cl}}(\gamma_N(\theta_0);\{M_y\}_y)
    \le
    N^2\sum_y|t_y|.
\end{equation}

It remains to control the off-diagonal blocks using a PPT measurement on the
shield.  Choose $\phi_y$ such that $t_y=|t_y|e^{i\phi_y}$, with arbitrary
$\phi_y$ when $t_y=0$.  For $\beta\in[0,2\pi)$, define
\begin{equation}
    \ket{a_{\pm}^{(y,\beta)}}
    :=
    \frac{1}{\sqrt2}
    \bigl(
        \ket N
        \pm e^{-i(\phi_y-\beta)}\ket0
    \bigr)^A,
    \qquad
    \ket{b_\beta}
    :=
    \frac{1}{\sqrt2}
    \bigl(\ket0+e^{-i\beta}\ket N\bigr)^B,
\end{equation}
and
\begin{equation}
    H_{y,\beta}^{\pm}
    :=
    \langle a_{\pm}^{(y,\beta)}\otimes b_\beta|
    M_y
    |a_{\pm}^{(y,\beta)}\otimes b_\beta\rangle.
\end{equation}
Each $M_y$ is separable across $AA':BB'$, so every
$H_{y,\beta}^{\pm}$ is PPT on $A':B'$.  Furthermore,
\begin{equation}
    \frac12\sum_{\pm}
    \ket{a_{\pm}^{(y,\beta)}}
    \bra{a_{\pm}^{(y,\beta)}}
    =
    \frac12\bigl(\ket0\bra0+\ket N\bra N\bigr),
\end{equation}
which is independent of~$y$.  Using $\sum_yM_y=I$, we obtain
\begin{equation}
    \sum_{y,\pm}\frac12H_{y,\beta}^{\pm}
    =I^{A'B'}.
\end{equation}
Hence $\{\frac12H_{y,\beta}^{\pm}\}_{y,\pm}$ is a PPT POVM for every~$\beta$.

A direct Fourier average gives
\begin{equation}\label{eq:key-identity-main-proof}
    \int_0^{2\pi}
    \bigl(H_{y,\beta}^+-H_{y,\beta}^-\bigr)
    \frac{d\beta}{2\pi}
    =
    \frac12\bigl(
        e^{-i\phi_y}T_y
        +
        e^{i\phi_y}T_y^\dagger
    \bigr).
\end{equation}
Indeed, the desired off-diagonal block is the zero Fourier mode, while all
other matrix elements carry factors $e^{\pm i \beta}$ or
$e^{\pm2i\beta}$.  Taking the trace against~$\Delta$ in
\eqref{eq:key-identity-main-proof} yields
\begin{equation}
    |t_y|
    =
    \int_0^{2\pi}
    \Tr\Bigl(
        \Delta(H_{y,\beta}^+-H_{y,\beta}^-)
    \Bigr)
    \frac{d\beta}{2\pi}.
\end{equation}
Consequently,
\begin{equation}
\begin{aligned}
    \sum_y|t_y|
    & \le
    \int_0^{2\pi}
    \sum_{y,\pm}
    \bigl|\Tr(\Delta H_{y,\beta}^{\pm})\bigr|
    \frac{d\beta}{2\pi} \\
    & \le
    2\|\Delta\|_{\mathrm{PPT}}
    =
    \|\rho_+-\rho_-\|_{\mathrm{PPT}}.
\end{aligned}
\end{equation}
Combining this estimate with~\eqref{eq:FI-sum-t} and taking the supremum over
LOCC POVMs proves~\eqref{eq:general-upper}.

\smallskip
\noindent\emph{Proof of~\rm(ii).}
Let $\{M_y\}_y$ be a POVM on $\wt B B'$.  Its outcome probability is
\begin{equation}
    p_y=\frac{a_y+b_y}{2},
\end{equation}
which is independent of~$\theta$.  Whenever $a_y+b_y>0$, the conditional state
on $AB$ is
\begin{equation}
    \tau_y^{AB}(\theta)
    =
    \tau_{N,v_y}^{AB}(\theta),
    \qquad
    v_y:=\frac{a_y-b_y}{a_y+b_y}.
\end{equation}
The phase choice in Lemma~\ref{lem:local-two-mode-readout} is independent of
$v_y$, so the same local product measurement can be used for every~$y$.
Because $p_y$ is independent of~$\theta$, the Fisher information of the joint
outcomes is
\begin{equation}
\begin{aligned}
    \sum_{y:\,a_y+b_y>0}p_yN^2v_y^2
    & =
    N^2\sum_{y:\,a_y+b_y>0}
    \frac{(a_y-b_y)^2}{2(a_y+b_y)}.
\end{aligned}
\end{equation}
This proves~\eqref{eq:general-locc-lb}.
\end{proof}

\begin{proof}[Proof of Corollary~\ref{cor:trace-distance}]
For any POVM $\{M_y\}_y$ on $\wt B B'$, Cauchy--Schwarz gives
\begin{equation}
\begin{aligned}
    \left(\sum_y|a_y-b_y|\right)^2
    & \le
    \left(\sum_y(a_y+b_y)\right)
    \left(\sum_{y:\,a_y+b_y>0}
        \frac{(a_y-b_y)^2}{a_y+b_y}
    \right) \\
    & =
    2\sum_{y:\,a_y+b_y>0}
        \frac{(a_y-b_y)^2}{a_y+b_y}.
\end{aligned}
\end{equation}
Choose a Helstrom measurement, for which
\begin{equation}
    \sum_y|a_y-b_y|=\|\sigma_+-\sigma_-\|_1,
\end{equation}
and apply~\eqref{eq:general-locc-lb}.
\end{proof}

\section{General states with a fixed key measurement}
\label{app:general-fixed-key}

In this appendix, we prove Theorem~\ref{thm:general-fixed-key-assisted}. We first show that positive private
information associated with the eigenbasis of the encoding Hamiltonian
forces coherence between different encoding branches. We then show
that these coherences can be read out by a one-way LOCC measurement
after Alice's unencoded system is transmitted to Bob. Combining these
two observations gives a lower bound on the locally accessible Fisher
information. After proving Theorem~\ref{thm:general-fixed-key-assisted}, we also
discuss the corresponding scenarios involving entanglement and
classical correlations.

Let $\sigma_0^{XBC}$ be a finite-dimensional state, and let
\begin{equation}
H_X=\sum_x h_x |x\rangle\langle x|_X.    
\end{equation}
The parameter-dependent state is
\begin{equation}
\sigma_\theta^{XBC}
=
(U_\theta^X\otimes I_{BC})
\sigma_0^{XBC}
(U_\theta^{X\dagger}\otimes I_{BC}),
\qquad
U_\theta^X=e^{i\theta H_X}.
\end{equation}
Writing $\sigma_0^{XBC}$ in the eigenbasis of $H_X$ gives
\begin{equation}
\sigma_0^{XBC}
=
\sum_{x,x'}
|x\rangle\langle x'|_X\otimes\sigma_{xx'}^{BC},
\end{equation}
and therefore
\begin{equation}
\sigma_\theta^{XBC}
=
\sum_{x,x'}
e^{i(h_x-h_{x'})\theta}
|x\rangle\langle x'|_X\otimes\sigma_{xx'}^{BC}.
\end{equation}
Define
\begin{equation}
p_x:=\operatorname{Tr}\sigma_{xx}^{BC},
\qquad
X_+:=\{x:p_x>0\},
\qquad
m:=|X_+|.
\end{equation}
For $x\in X_+$, define the normalized conditional states
\begin{equation}
\gamma_x^{BC}:=\frac{\sigma_{xx}^{BC}}{p_x},
\qquad
\gamma_x^B:=\operatorname{Tr}_C\gamma_x^{BC}.
\end{equation}
We also define
\begin{equation}
\delta_H:=\min_{\substack{x,x'\in X_+\\x\neq x'}}
|h_x-h_{x'}|.
\end{equation}
Relabel \(X_+=\{1,\ldots,m\}\) such that
\begin{equation}
h_1<h_2<\cdots<h_m,
\end{equation}
and define
\begin{equation}
L_m
:=
\max_{1\leq j\leq m}
\left(
\sum_{\ell=1}^{j-1}\frac{1}{\ell}
+
\sum_{\ell=1}^{m-j}\frac{1}{\ell}
\right).
\end{equation}
The nontrivial sensing statement requires \(m\geq2\) and \(\delta_H>0\).

\subsection{Branch coherence}

The lemma below follows by specializing known relations from the resource theory of coherence. Rana et al.~\cite{rana2017logarithmic} proved the corresponding relation for standard coherence, where the coherence is defined as the sum of the absolute values of all off-diagonal matrix elements, and anticipated its extension to arbitrary block decompositions by replacing this measure with the sum of the trace norms of the off-diagonal blocks. Bischof et al.~\cite{bischof2021quantifying} established the corresponding relations within a general framework for coherence with respect to quantum measurements. 
To apply these results in our setting, we specialize them to the encoding-basis decomposition and identify the resulting constraint on the total trace-norm coherence of the off-diagonal blocks relevant for sensing. We include a direct proof adapted to our notation to make this connection self-contained.

\begin{lemma}\label{lem:block-coherence-bound}
Let \(\rho^{XR} = \sum_{x,x'} |x\rangle \langle x'|^X\otimes A_{xx'}^R\) be a bipartite state, $p_x=\tr(A_{xx}^R)$, its purfication $\ket{\Phi}^{XRE}=\sum_x \sqrt{p_x}\ket{x}^X\ket{\phi_x}^{RE}$, $\omega^{XE}=\tr_R[\mathcal D_X(\ket{\Phi}\bra{\Phi}^{XRE})]$, and let \(\mathcal D_X\) denote complete dephasing on
\(X\) in the reference basis \(\{\ket{x}\}\). Then we have 
\begin{equation}
    H(X|E)_{\omega} \le\log (1+\sum_{x\neq x'} \|A_{xx'}\|_1),\quad \text{or}\quad
    \sum_{x\neq x'} \|A_{xx'}\|_1\geq 2^{H(X|E)_{\omega}}-1
\end{equation}
\end{lemma}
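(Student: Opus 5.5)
The plan is to pass from the von Neumann conditional entropy to the conditional max-entropy. For cq states the max-entropy has a fidelity expression that can be compared directly with the off-diagonal block norms.

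First, identify the blocks through the purification. Write $\sigma_x^E:=\operatorname{Tr}_R\ket{\phi_x}\bra{\phi_x}^{RE}$, so that $\omega^{XE}=\sum_x p_x\ket{x}\bra{x}\otimes\sigma_x^E$ and $A_{xx'}=\sqrt{p_xp_{x'}}\operatorname{Tr}_E\ket{\phi_x}\bra{\phi_{x'}}$. By Uhlmann's theorem in its trace-norm form,
\[
\|\operatorname{Tr}_E\ket{\phi_x}\bra{\phi_{x'}}\|_1=F(\sigma_x^E,\sigma_{x'}^E),
\]
with root fidelity $F$. This is the same identity the paper uses in the main text. Hence
\[
1+\sum_{x\neq x'}\|A_{xx'}\|_1=\sum_{x,x'}\sqrt{p_xp_{x'}}\,F(\sigma_x,\sigma_{x'}),
\]
where the diagonal terms contribute $\sum_x p_x=1$. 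It therefore suffices to show
\[
2^{H(X|E)_\omega}\le \sum_{x,x'}\sqrt{p_xp_{x'}}\,F(\sigma_x,\sigma_{x'}).
\]

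Second, use the monotonicity of Rényi conditional entropies in the order: $H(X|E)_\omega\le H_{\max}(X|E)_\omega$, where
\[
2^{H_{\max}(X|E)_\omega}=\max_{\tau^E}F(\omega^{XE},I^X\otimes\tau^E)^2 .
\]
This is the order-$1/2$ sandwiched quantity, optimized over $\tau$. Because $\omega$ is block diagonal in $X$, the fidelity splits as
\[
F(\omega,I\otimes\tau)=\sum_x\sqrt{p_x}\,F(\sigma_x,\tau).
\]
Justifying the inequality $H\le H_{\max}$ and this cq decomposition is the step that requires citing standard smooth-entropy facts. I take them as known (Tomamichel's monotonicity in $\alpha$), and this is the main technical input rather than an obstacle.

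Third, the key estimate is
\[
\Bigl(\sum_x\sqrt{p_x}F(\sigma_x,\tau)\Bigr)^2\le\sum_{x,x'}\sqrt{p_xp_{x'}}F(\sigma_x,\sigma_{x'})
\]
for every state $\tau$. Fix a purification $\ket{s}$ of $\tau$ on $E\otimes E'$. By Uhlmann, choose for each $x$ a purification $\ket{\varphi_x}$ of $\sigma_x$ on the same space with $\langle s|\varphi_x\rangle=F(\sigma_x,\tau)\ge0$. Set $\ket{v}=\sum_x\sqrt{p_x}\ket{\varphi_x}$. By Cauchy--Schwarz,
\[
\Bigl(\sum_x\sqrt{p_x}F(\sigma_x,\tau)\Bigr)^2=|\langle s|v\rangle|^2\le\langle v|v\rangle=\sum_{x,x'}\sqrt{p_xp_{x'}}\langle\varphi_x|\varphi_{x'}\rangle .
\]
Since the left side is real, the right side may be replaced by its absolute value. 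Finally $|\langle\varphi_x|\varphi_{x'}\rangle|\le F(\sigma_x,\sigma_{x'})$, again by Uhlmann.

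Maximizing over $\tau$ and taking logarithms gives $H(X|E)_\omega\le\log\bigl(1+\sum_{x\neq x'}\|A_{xx'}\|_1\bigr)$, and the equivalent exponentiated form follows immediately.
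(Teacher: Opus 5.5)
Your argument is correct, but it takes a genuinely different route from the paper.

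The paper stays on the system side. Using the polar decomposition of each $A_{xx'}$, it adds positive operators $P^{(xx')}$, supported on the $\{x,x'\}$ block, that cancel the off-diagonal blocks. This gives a block-diagonal $\Omega\ge\rho$ with $\operatorname{Tr}\Omega=1+\sum_{x\neq x'}\|A_{xx'}\|_1$. In effect it bounds the block max-relative entropy of coherence by $\log\bigl(1+\sum_{x\neq x'}\|A_{xx'}\|_1\bigr)$. Then $D\le D_{\max}$ (operator monotonicity of $\log$) and the identity $S(\mathcal D_X(\rho))-S(\rho)=H(X|E)_\omega$ finish the proof.

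You instead pass to the environment. The trace-norm form of Uhlmann's theorem rewrites the right-hand side as $\sum_{x,x'}\sqrt{p_xp_{x'}}\,F(\sigma_x^E,\sigma_{x'}^E)$. You then replace $H$ by $H_{\max}$. The remaining fidelity inequality follows from a single Cauchy--Schwarz step on Uhlmann-aligned purifications. Two small points there: take $E'\cong E$ so the Uhlmann maximum is actually attained. And the last step is simply that the real number $\langle v|v\rangle$ is at most $\sum_{x,x'}\sqrt{p_xp_{x'}}\,|\langle\varphi_x|\varphi_{x'}\rangle|$; it is not the left side's reality that is being used. All steps check out, including the cq splitting $F(\omega^{XE},I^X\otimes\tau)=\sum_x\sqrt{p_x}F(\sigma_x,\tau)$.

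The paper's proof is more elementary and self-contained: it needs only a block positivity fact and operator monotonicity of the logarithm. Yours relies on Uhlmann's theorem and the $\alpha$-monotonicity of sandwiched R\'enyi entropies as black boxes. In exchange, it proves the stronger one-shot bound $H_{\max}(X|E)_\omega\le\log\bigl(1+\sum_{x\neq x'}\|A_{xx'}\|_1\bigr)$. It also turns the mechanism the paper only sketches in the main text, that environmental indistinguishability of the branches forces branch coherence, into the literal content of the proof. That makes it a natural starting point for smoothed or one-shot versions of the sensing certificate.
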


\begin{proof}
For each unordered pair \(x<x'\), let \(A_{xx'}=U_{xx'}|A_{xx'}|\) be the polar decomposition and define
\begin{equation}
P^{(xx')}
:=
\ket{x}\bra{x}^X\otimes |A_{xx'}^\dagger|
+\ket{x'}\bra{x'}^X\otimes |A_{xx'}|
-\ket{x}\bra{x'}^X\otimes A_{xx'}
-\ket{x'}\bra{x}^X\otimes A_{xx'}^\dagger .
\end{equation}
Since
\begin{equation}
P^{(xx')}
=
\begin{pmatrix}
U_{xx'}\\[1mm] -I
\end{pmatrix}
|A_{xx'}|
\begin{pmatrix}
U_{xx'}^\dagger & -I
\end{pmatrix}
\ge 0 ,
\end{equation}
we may set
\begin{equation}
\Omega:=\rho+\sum_{x<x'} P^{(xx')}\ge \rho .
\end{equation}
By construction, \(\Omega\) is block-diagonal in the \(X\)-basis. Moreover,
\begin{equation}
\Tr \Omega
=
1+\sum_{x<x'}\Bigl(\|A_{xx'}\|_1+\|A_{xx'}^\dagger\|_1\Bigr)
=
1+\sum_{x\neq x'} \|A_{xx'}\|_1.
\end{equation}
Hence, with
\begin{equation}
\sigma:=\frac{\Omega}{\Tr\Omega},
\end{equation}
we have \(\sigma\) block-diagonal in \(X\) and
\begin{equation}
\rho\le (1+\sum_{x\neq x'} \|A_{xx'}\|_1)\sigma .
\end{equation}
By operator monotonicity of the logarithm,
\begin{equation}
\log \rho
\le
\log(1+\sum_{x\neq x'} \|A_{xx'}\|_1) \cdot \mathbb I + \log \sigma .
\end{equation}
Therefore 
\begin{equation}
D(\rho\|\sigma)
=
\Tr \rho(\log \rho-\log \sigma)
\le
\log (1+\sum_{x\neq x'} \|A_{xx'}\|_1).
\end{equation}
And
\begin{equation}
D(\rho\|\sigma) =  S(\mathcal D_X(\rho))-S(\rho) + D(\mc D_X(\rho) \| \sigma)\geq S(\mathcal D_X(\rho))-S(\rho) .
\end{equation}
Note the relation
\begin{equation}\begin{aligned}
&X_{+}:=\left\{x: p_x>0\right\},\\
&S(\rho)=S(E)_\Phi=S(E)_\omega,\\
&S(\mathcal D_X(\rho))=H(X)+\sum_{x\in X_+}p_xS(A_{xx}/p_x)=H(X)+\sum_{x\in X_+}p_xS(E)_{\phi_x}=S(XE)_\omega  ,
\end{aligned}
\end{equation}
we have 
\begin{equation}
 S(\mathcal D_X(\rho))-S(\rho)=H(X|E)_\omega.
\end{equation}
Combining above equations, we have thus proven the lemma.
\end{proof}

\subsection{LOCC readout}\label{SI_sec:LOCC_readout}

In this subsection, we construct an LOCC measurement strategy that exploits the off-diagonal block coherence of $\sigma^{XBC}$. We first consider the simple case in which the measurement involves only one off-diagonal block.



\begin{lemma}\label{lem:two-block-locc}
Let
\begin{equation}
\rho_\theta^{XR}
=
\sum_{x,x'}
e^{i(h_x-h_{x'})\theta}
|x\rangle\langle x'|^X\otimes R_{xx'},\qquad
p_x:=\Tr R_{xx},
\end{equation}
where Alice holds $X$ and Bob holds $R$. For every $j\neq k$ with $p_j,p_k>0$, there exists a one-way LOCC measurement across $X:R$ such that
\begin{equation}
F_{\mathrm{LOCC}}^{(X:R)}\!\left(\rho_{\theta}^{XR}\right)
\ge
\frac{|h_j-h_k|^2\|R_{jk}\|_1^2}{p_j+p_k}.
\end{equation}
\end{lemma}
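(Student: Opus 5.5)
The plan is to build a one-way LOCC protocol in which Alice projects $X$ onto a two-dimensional superposition of $\ket{j}$ and $\ket{k}$ and communicates her outcome to Bob. Bob then makes a two-outcome measurement on $R$ chosen to align with the phase of the block $R_{jk}$.

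Concretely, Alice first measures $X$ with the POVM $\{\ket{s_\alpha}\bra{s_\alpha}\}_{s=\pm}\cup\{I-\ket j\bra j-\ket k\bra k\}$, where
\[
\ket{s_\alpha}=\tfrac{1}{\sqrt2}\bigl(\ket j+s e^{i\alpha}\ket k\bigr).
\]
Conditioned on outcome $s$, Bob's unnormalized state is
\[
\tfrac12\Bigl(R_{jj}+R_{kk}+s\bigl(e^{i(\Delta\theta-\alpha)}R_{jk}^{\dagger}+e^{-i(\Delta\theta-\alpha)}R_{jk}\bigr)\Bigr),
\qquad \Delta:=h_j-h_k,
\]
up to a convention for which block carries $e^{i\Delta\theta}$. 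The third outcome is $\theta$-independent, so it contributes zero Fisher information and can be discarded.

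Next, Bob must read out the Hermitian operator $K_\phi:=e^{i\phi}R_{jk}+e^{-i\phi}R_{jk}^\dagger$. Take the polar decomposition $R_{jk}=U|R_{jk}|$; this is a square operator on $R$, so $U$ can be taken unitary. The natural choice is to use the eigenbasis of the Hermitian part of $U$ rotated by a phase. A cleaner route is to let Bob apply a two-outcome POVM $\{Q,I-Q\}$ with $Q$ the projector onto the positive part of $K_\phi$, with $\phi$ chosen so that $\Tr(QK_\phi)$ is large.

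The key step is to show that some phase $\phi$ and some projective readout achieve signal strength $\|R_{jk}\|_1$. The trace-norm duality $\|R_{jk}\|_1=\max_U|\Tr(U^\dagger R_{jk})|$ does not by itself give this, because Bob's measurement must be Hermitian. I would instead use a Fourier/phase-averaging argument like the one in the proof of Theorem~\ref{thm:hiding-recovery}(i). Writing the eigen-decomposition of $U$ (which is normal) and refining Bob's measurement into the eigenbasis of $U$, with Alice's phase $\alpha$ free, one gets, averaged over $\alpha$ (hence for some $\alpha$), a combined Bob+Alice probability whose $\theta$-derivative collects $\sum_\ell|\langle u_\ell|R_{jk}|u_\ell\rangle|$-type terms. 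Refining by the eigenbasis of $U$ and letting each branch carry its own phase via the classical message is exactly where one-way communication helps: Bob's outcome $\ell$ is sent back, or equivalently Bob measures first. So I would reverse the order of the two measurements. Bob measures $R$ in an eigenbasis $\{\ket{u_\ell}\}$ chosen so that $\sum_\ell|\langle u_\ell|R_{jk}|u_\ell\rangle|\ge$ (something comparable to) $\|R_{jk}\|_1$. Alice then measures $\ket{s_{\alpha_\ell}}$ with $\alpha_\ell$ tuned to the phase of $c_\ell:=\langle u_\ell|R_{jk}|u_\ell\rangle$ and to $\theta_0$, so that the probabilities sit at the point of maximal slope. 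This is the same construction as in Lemma~\ref{lem:local-two-mode-readout}.

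With that protocol, each branch $\ell$ has probabilities $p(\ell,s)=\tfrac12\bigl(a_\ell+s|c_\ell|\sin(\Delta(\theta-\theta_0))\bigr)$, where $a_\ell=\tfrac12\langle u_\ell|(R_{jj}+R_{kk})|u_\ell\rangle$. The Fisher information is then $\sum_\ell \Delta^2|c_\ell|^2/a_\ell\ge\Delta^2(\sum_\ell|c_\ell|)^2/\sum_\ell a_\ell$ by Cauchy–Schwarz. Since $\sum_\ell a_\ell=(p_j+p_k)/2$, this gives $2\Delta^2(\sum_\ell|c_\ell|)^2/(p_j+p_k)$, which has slack of a factor of $2$.

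The main obstacle is choosing a basis with $\sum_\ell|c_\ell|\ge\|R_{jk}\|_1/\sqrt2$; a diagonal-element sum can be much smaller than the trace norm for a non-normal $R_{jk}$. To handle this I would enlarge the measurement. Let Bob measure the two-outcome-per-$\ell$ family obtained from the polar decomposition, i.e. use vectors $\ket{u_\ell}$ from the singular value decomposition $R_{jk}=\sum_\ell s_\ell\ket{w_\ell}\bra{u_\ell}$, and take Bob's POVM elements to be $\tfrac12(\ket{u_\ell}\pm\ket{w_\ell})(\cdots)^\dagger$-type combinations. This requires handling the non-orthogonality of $\{u_\ell\}$ versus $\{w_\ell\}$. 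If that is too messy, I would fall back on the positivity constraint $|c|^2\le a_j a_k$ of the $2\times2$ compressions, so that losing the factor $2$ is acceptable given the target constant.
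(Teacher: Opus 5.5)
\textbf{There is a genuine gap: the step that carries all the content is identified but never proved.}

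Your first route is the right one, and it is essentially the paper's construction. Alice measures $\ket{s_\alpha}$ together with the $\theta$-independent complement. Bob then measures the projector onto the positive part of the Hermitian operator that multiplies $\dot\tau_s$. You correctly say that one needs a phase for which Bob's readout captures a signal of order $\|R_{jk}\|_1$, but you stop there, and the substitutes you offer do not close the step:
\begin{itemize}
\item The SVD-based POVM is never specified, and it leaves uncontrolled cross terms such as $\langle u_\ell|R_{jk}|w_\ell\rangle$.
\item The positivity constraint $|c|^2\le a_ja_k$ is an \emph{upper} bound on the signal. It cannot give the lower bound on $\sum_\ell|c_\ell|$ that you need.
\item You cannot afford to lose a factor anywhere, because the lemma states the constant exactly.
\end{itemize}

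\textbf{The missing idea is one line.} Set $A:=e^{i(\Delta\theta_0+\alpha)}R_{jk}$, $\operatorname{Re}A:=(A+A^\dagger)/2$ and $\operatorname{Im}A:=(A-A^\dagger)/(2i)$.
\begin{itemize}
\item Using $R_{kj}=R_{jk}^\dagger$, one finds $\dot\tau_\pm(\theta_0)=\mp\Delta\,\operatorname{Im}A$.
\item The triangle inequality gives $\|R_{jk}\|_1=\|A\|_1\le\|\operatorname{Re}A\|_1+\|\operatorname{Im}A\|_1$.
\item Shifting $\alpha$ by $\pi/2$ sends $A\mapsto iA$, and $\operatorname{Im}(iA)=\operatorname{Re}A$. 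So for some $\alpha$, the operator $H_\alpha:=\operatorname{Im}A$ satisfies $\|H_\alpha\|_1\ge\tfrac12\|R_{jk}\|_1$.
\item Bob's sign measurement on $H_\alpha$ gives $\sum_b|\dot p_{s,b}|=|\Delta|\,\|H_\alpha\|_1$ for \emph{each} of Alice's outcomes $s=\pm$. The total is therefore at least $|\Delta|\,\|R_{jk}\|_1$: the factor $\tfrac12$ is exactly compensated by the two outcomes.
\item Cauchy--Schwarz with $\sum_{s,b}p_{s,b}(\theta_0)=p_j+p_k$ then gives the stated bound.
\end{itemize}
This is the paper's proof.

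\textbf{Your Bob-first route can also be completed with the same observation, once its normalization is fixed.} The branch probabilities are $p(\ell,s)=a_\ell+s|c_\ell|\sin(\Delta(\theta-\theta_0))$, not half of that, since they must sum to $p_j+p_k$.
\begin{itemize}
\item Cauchy--Schwarz then gives $4\Delta^2(\sum_\ell|c_\ell|)^2/(p_j+p_k)$, so you need only $\sum_\ell|c_\ell|\ge\tfrac12\|R_{jk}\|_1$.
\item Take $\{\ket{u_\ell}\}$ to be an eigenbasis of whichever of $\operatorname{Re}R_{jk}$ and $\operatorname{Im}R_{jk}$ has the larger trace norm. Then the real or imaginary part of each $c_\ell$ equals the corresponding eigenvalue, so $\sum_\ell|c_\ell|\ge\tfrac12\|R_{jk}\|_1$.
\item Your concern about non-normal $R_{jk}$ applies to a fixed basis, not to the best basis.
\end{itemize}
This Bob-to-Alice protocol proves the lemma as stated. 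However, the Alice-first ordering is what Lemma~\ref{lemm:all-pair-locc} reuses: there Alice mixes the pairwise measurements on $X$ and tells Bob which pair occurred. A pair-dependent basis on Bob's side does not combine in that way.
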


\begin{proof}
Set
\begin{equation}
\Delta_{jk}:=h_j-h_k .
\end{equation}
We only use the two-dimensional subspace of \(X\) spanned by
\(|j\rangle\) and \(|k\rangle\). Alice first projects onto this subspace by measuring in the basis
\begin{equation}
|\pm_\alpha\rangle
:=
\frac{|j\rangle \pm e^{i\alpha}|k\rangle}{\sqrt 2},
\end{equation}
where the phase \(\alpha\) will be chosen below.

The unnormalized conditional states on \(R\), given Alice's two outcomes, are
\begin{equation}
\tau_\pm(\theta)
=
\frac12
\left(
R_{jj}+R_{kk}
\pm e^{i(\Delta_{jk}\theta+\alpha)}R_{jk}
\pm e^{-i(\Delta_{jk}\theta+\alpha)}R_{kj}
\right).
\end{equation}
Therefore, at a given value $\theta_0$ of the unknown parameter,
\begin{equation}
\frac{ \partial \tau_\pm}{\partial \theta}\big|_{\theta=\theta_0}
=
\mp \Delta_{jk}\,
\operatorname{Im}
\left(
e^{i(\Delta_{jk}\theta_0+\alpha)}R_{jk}
\right)=\mp\Delta_{jk}H_\alpha.
\end{equation}
where $\operatorname{Im}(X)=(X-X^\dagger)/(2i)$, $\operatorname{Re}(X)=(X+X^\dagger)/2$, $H_\alpha$ is  Hermitian.

For any operator \(A\),
\begin{equation}
\|A\|_1
=
\|\operatorname{Re}A+i\operatorname{Im}A\|_1
\le
\|\operatorname{Re}A\|_1+\|\operatorname{Im}A\|_1 .
\end{equation}
Hence at least one of \(\operatorname{Re}A\) and \(\operatorname{Im}A\) has trace norm at least
\(\|A\|_1/2\). Now we choose \(\alpha\) so that $H_\alpha$ has largest trace norm, then
\begin{equation}
\|H_\alpha\|_1
\ge
\frac12\|R_{jk}\|_1 .
\end{equation}

Let
\begin{equation}
H_\alpha = H_{\alpha,+}-H_{\alpha,-}
\end{equation}
be the spectrum decomposition of \(H_\alpha\), where
\(H_{\alpha,+},H_{\alpha,-}\ge 0\) have orthogonal supports. Let
\(\Pi_+\) and \(\Pi_-\) be the projections onto the supports of
\(H_{\alpha,+}\) and \(H_{\alpha,-}\), respectively, and let
\(\Pi_0=I-\Pi_+-\Pi_-\).

Bob measures the POVM \(\{\Pi_+,\Pi_-,\Pi_0\}\). Denote the resulting
joint probabilities by
\begin{equation}
p_{s,b}(\theta):=\operatorname{Tr}[\Pi_b\tau_s(\theta)],
\qquad
s\in\{+,-\},\quad b\in\{+,-,0\}.
\end{equation}
For Alice outcome \(+\), we have
\begin{equation}
\begin{aligned}
&\dot p_{+,+}(\theta_0)
=
\operatorname{Tr}[\Pi_+\dot\tau_+(\theta_0)]
=
-\Delta_{jk}\operatorname{Tr}H_{\alpha,+},\\
&\dot p_{+,-}(\theta_0)
=
\operatorname{Tr}[\Pi_-\dot\tau_+(\theta_0)]
=
\Delta_{jk}\operatorname{Tr}H_{\alpha,-},\\  
&\dot p_{+,0}(\theta_0)=0
\end{aligned}
\end{equation}

Therefore
\begin{equation}
\sum_{b\in\{+,-,0\}}|\dot p_{+,b}(\theta_0)|
=
|\Delta_{jk}|
\left(
\operatorname{Tr}H_{\alpha,+}
+
\operatorname{Tr}H_{\alpha,-}
\right)
=
|\Delta_{jk}|\|H_\alpha\|_1 .
\end{equation}
Similarly, for Alice outcome \(-\), 
\begin{equation}
\begin{aligned}
&\dot p_{-,+}(\theta_0)
=
\Delta_{jk}\operatorname{Tr}H_{\alpha,+}, \\
&\dot p_{-,-}(\theta_0)
=
-\Delta_{jk}\operatorname{Tr}H_{\alpha,-},\\
&\dot p_{-,0}(\theta_0)=0.
\end{aligned}
\end{equation}
Hence
\begin{equation}
\sum_{b\in\{+,-,0\}}|\dot p_{-,b}(\theta_0)|
=
|\Delta_{jk}|\|H_\alpha\|_1 .
\end{equation}

Combining the two Alice outcomes gives
\begin{equation}
\sum_{s\in\{+,-\}}\sum_{b\in\{+,-,0\}}
|\dot p_{s,b}(\theta_0)|
=
2|\Delta_{jk}|\|H_\alpha\|_1 .
\end{equation}
By the choice of \(\alpha\),
\begin{equation}
\|H_\alpha\|_1\ge \frac12\|R_{jk}\|_1.
\end{equation}
Thus, we can find that
\begin{equation}
\sum_{s\in\{+,-\}}\sum_{b\in\{+,-,0\}}|\dot p_{s,b}(\theta_0)|
\ge
|\Delta_{jk}|\|R_{jk}\|_1 .
\end{equation}


The complementary outcome of Alice's projection has probability
$1-p_j-p_k$, which is independent of $\theta$, and therefore does not
contribute to the Fisher information. Moreover, the total probability
of the outcomes in the $j,k$ sector is
\begin{equation}
\sum_{s\in\{+,-\}}\sum_{b\in\{+,-,0\}}p_{s,b}(\theta_0)
=
p_j+p_k.
\end{equation}
Applying the weighted Cauchy--Schwarz inequality only to these outcomes,
we obtain
\begin{equation}
\begin{aligned}
F_{\mathrm{cl}}
&=
\sum_{s\in\{+,-\}}\sum_{b\in\{+,-,0\}}
\frac{\dot p_{s,b}(\theta_0)^2}{p_{s,b}(\theta_0)}\ge
\frac{
\left(
\sum_{s\in\{+,-\}}\sum_{b\in\{+,-,0\}}
|\dot p_{s,b}(\theta_0)|
\right)^2
}{
\sum_{s\in\{+,-\}}\sum_{b\in\{+,-,0\}}
p_{s,b}(\theta_0)
}\ge
\frac{|h_j-h_k|^2\|R_{jk}\|_1^2}{p_j+p_k}.
\end{aligned}
\end{equation}
\end{proof}

We now show that all off-diagonal $X$ blocks can be incorporated into
a single one-way LOCC measurement. Although the two-dimensional subspaces associated with different pairs generally overlap, Alice can combine the corresponding pairwise measurements into a valid POVM by assigning the pair $j,k$ a weight proportional to $1/(k-j)$ after ordering $h_1<\cdots<h_m$.

\begin{lemma}\label{lemm:all-pair-locc}
Let
\begin{equation}
\rho_\theta^{XR}
=
\sum_{j,k\in X_+}
e^{i(h_j-h_k)\theta}
|j\rangle\langle k|_X\otimes R_{jk}, \quad p_j=\operatorname{Tr}R_{jj}.
\end{equation}
where Alice holds $X$ and Bob holds $R$.
There exists a one-way LOCC
measurement from $X$ to $R$ such that
\begin{equation}
F_{\mathrm{LOCC}}^{(X: R)}
\left(\rho_{\theta}^{XR}\right)
\geq
\frac{1}{L_m}
\sum_{\substack{j<k\\j,k\in X_+}}
\frac{|h_j-h_k|^2}{k-j}
\frac{\|R_{jk}\|_1^2}{p_j+p_k}.
\end{equation}
\end{lemma}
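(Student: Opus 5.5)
The plan is to build Alice's measurement as a weighted coarse-graining of the pairwise measurements from Lemma~\ref{lem:two-block-locc}, one for each pair $j<k$ in $X_+$. For each such pair, Alice uses the two rank-one operators
\begin{equation}
w_{jk}\,|\pm_{\alpha_{jk}}\rangle\langle\pm_{\alpha_{jk}}|,
\qquad
|\pm_{\alpha_{jk}}\rangle=\frac{|j\rangle\pm e^{i\alpha_{jk}}|k\rangle}{\sqrt2},
\qquad
w_{jk}:=\frac{1}{L_m(k-j)}.
\end{equation}
The phase $\alpha_{jk}$ is the one chosen in the proof of Lemma~\ref{lem:two-block-locc} for the block $R_{jk}$. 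Summing over $\pm$ gives $w_{jk}(|j\rangle\langle j|+|k\rangle\langle k|)$.

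Summing over all pairs, the coefficient of $|j\rangle\langle j|$ is
\begin{equation}
\frac{1}{L_m}\sum_{k\neq j}\frac{1}{|k-j|}
=\frac{1}{L_m}\Bigl(\sum_{\ell=1}^{j-1}\frac1\ell+\sum_{\ell=1}^{m-j}\frac1\ell\Bigr)\le 1,
\end{equation}
by the definition of $L_m$. Hence $\sum_{j<k,\pm} w_{jk}|\pm_{\alpha_{jk}}\rangle\langle\pm_{\alpha_{jk}}|\le I_X$. The remainder operator is $E_0:=I_X-\sum(\cdots)$, which is diagonal in the $\{|x\rangle\}$ basis and includes all of $X\setminus X_+$. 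Adding $E_0$ completes the POVM.

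The outcome $E_0$ has probability $\sum_x (E_0)_{xx}p_x$, which is independent of $\theta$ because $E_0$ is diagonal. It therefore contributes zero Fisher information and can be discarded when bounding from below.

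For an outcome labelled $(j,k,s)$, Alice broadcasts the full label $(j,k,s)$ to Bob. Bob then measures the POVM $\{\Pi_+,\Pi_-,\Pi_0\}$ constructed in Lemma~\ref{lem:two-block-locc} for the pair $(j,k)$. This is one-way LOCC.

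The joint outcome probabilities are $w_{jk}\,p^{(jk)}_{s,b}(\theta)$, where $p^{(jk)}_{s,b}$ are exactly the probabilities in that lemma's proof. Fisher information is additive over disjoint outcome labels, and multiplying a probability and its derivative by the constant $w_{jk}$ scales $\dot p^2/p$ by $w_{jk}$. Summing over $s$ and $b$ and applying the Cauchy--Schwarz step of Lemma~\ref{lem:two-block-locc} within each pair sector gives
\begin{equation}
F_{\mathrm{cl}}\ge\sum_{j<k}w_{jk}\,\frac{|h_j-h_k|^2\|R_{jk}\|_1^2}{p_j+p_k},
\end{equation}
which is the claim.

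The main point needing care is the normalization of Alice's POVM. This is where the harmonic-sum constant $L_m$ enters, and it requires the pair weights $1/(k-j)$ to use the ordering $h_1<\dots<h_m$ so that the row sums take the stated form. A second point is confirming that the pair-sector lower bound in Lemma~\ref{lem:two-block-locc} depends only on the probabilities inside that sector. It does, since Cauchy--Schwarz there was applied only to the $(j,k)$ outcomes, whose total probability is $p_j+p_k$, independently of the weight $w_{jk}$, which factors out.
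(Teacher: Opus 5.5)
Your proposal is correct and is essentially the paper's own proof: the same weights $1/(L_m(k-j))$ on the pairwise projectors $|\pm_{\alpha_{jk}}\rangle\langle\pm_{\alpha_{jk}}|$, the same diagonal completion whose probability is $\theta$-independent, one-way transmission of the label $(j,k,s)$ to Bob for the conditional POVM of Lemma~\ref{lem:two-block-locc}, and additivity of the rescaled per-pair Fisher information. One small correction to your closing remark: the normalization bound $\sum_{k\neq j}1/|k-j|\le L_m$ holds for any labeling of $X_+$ by $\{1,\dots,m\}$, so the ordering $h_1<\dots<h_m$ is not needed here; it only matters later, when $|h_j-h_k|\ge\delta_H(k-j)$ is used in the proof of Theorem~\ref{thm:general-fixed-key-assisted}.
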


\begin{proof}
For each unordered pair $j<k$ with $j,k\in X_+$, choose the phase
$\alpha_{jk}$ and Bob's conditional POVM according to the two-block
protocol constructed in the proof of Lemma~\ref{lem:two-block-locc} at $\theta_0$. Since Alice's pairwise measurements for different choices of
$j,k$ generally overlap, they must be appropriately weighted to form a
valid POVM. Specifically, Alice's
measurement is chosen as
\begin{equation}
\label{eq:all-pair-povm}
|\pm_{\alpha_{jk}}\rangle
:=
\frac{|j\rangle\pm e^{i\alpha_{jk}}|k\rangle}{\sqrt{2}},
\qquad
E_{jk,\pm}
:=
\frac{1}{L_m(k-j)}
|\pm_{\alpha_{jk}}\rangle
\langle\pm_{\alpha_{jk}}|.
\end{equation}
For each pair of $j,k$,
\begin{equation}
E_{jk,+}+E_{jk,-}
=
\frac{1}{L_m(k-j)}
\left(|j\rangle\langle j|+|k\rangle\langle k|\right).
\end{equation}
Therefore, the total coefficient of \(|j\rangle\langle j|\) in the sum of all $E_{jk,\pm}$ is
\begin{equation}
\frac{1}{L_m}
\sum_{\substack{k=1\\k\neq j}}^m
\frac{1}{|j-k|}
\leq 1.
\end{equation}
It follows that
\begin{equation}
E_\perp
:=
I_X-
\sum_{\substack{j<k\\j,k\in X_+}}
\sum_{s\in\{+,-\}}E_{jk,s}
\geq0,
\end{equation}
so these operators form a POVM. Moreover, \(E_\perp\) is diagonal in the encoding basis, and hence its outcome probability is independent of \(\theta\).

After obtaining \((j,k,s)\), Alice sends this label to Bob, who performs the corresponding conditional POVM from Lemma~\ref{lem:two-block-locc}. Let \(p_{s,b}^{(jk)}(\theta)\) denote the probabilities of the outcomes in the one-pair protocol in Lemma~\ref{lem:two-block-locc} for pair $j,k$. The proof of Lemma~\ref{lem:two-block-locc} establishes
\begin{equation}
\sum_{s,b}
\frac{\dot p_{s,b}^{(jk)}(\theta_0)^2}
{p_{s,b}^{(jk)}(\theta_0)}
\geq
\frac{|h_j-h_k|^2\|R_{jk}\|_1^2}{p_j+p_k}.
\end{equation}
For the weighted all-pair POVM in Eq.~\ref{eq:all-pair-povm}, the corresponding probabilities are
\begin{equation}\label{eq:P_jksb}
\widetilde p_{jk,s,b}(\theta)
=
\frac{1}{L_m(k-j)}
p_{s,b}^{(jk)}(\theta).
\end{equation}
Consequently,
\begin{equation}
\sum_{s,b}
\frac{\dot{\widetilde p}_{jk,s,b}(\theta_0)^2}
{\widetilde p_{jk,s,b}(\theta_0)}
\geq
\frac{1}{L_m(k-j)}
\frac{|h_j-h_k|^2\|R_{jk}\|_1^2}{p_j+p_k}.
\end{equation}
The outcomes associated with different pairs $j,k$ contribute additively to the Fisher information. Hence,
\begin{equation}
F_{\mathrm{LOCC}}^{(X: R)}
\left(\rho_{\theta}^{XR}\right)
\geq
\frac{1}{L_m}
\sum_{\substack{j<k\\j,k\in X_+}}
\frac{|h_j-h_k|^2}{k-j}
\frac{\|R_{jk}\|_1^2}{p_j+p_k},
\end{equation}
which proves the lemma.
\end{proof}

\subsection{Proof of Theorem 2}

We first consider the state obtained when Alice's unencoded system
$C$ is transmitted perfectly to Bob. 
Applying Lemma~\ref{lemm:all-pair-locc} to \(\sigma_{\theta}^{XBC}\), with \(R=BC\), gives
\begin{equation}
F_{\mathrm{LOCC}}^{(X: BC)}
\left(\sigma_{\theta}^{XBC}\right)
\geq
\frac{1}{L_m}
\sum_{j<k}
\frac{|h_j-h_k|^2}{k-j}
\frac{\|\sigma_{jk}^{BC}\|_1^2}{p_j+p_k}.
\end{equation}
Because the eigenvalues are ordered,
\begin{equation}
|h_j-h_k|
=
\sum_{\ell=j}^{k-1}(h_{\ell+1}-h_\ell)
\geq
\delta_H(k-j).
\end{equation}
Therefore,
\begin{equation}
F_{\mathrm{LOCC}}^{(X: BC)}
\left(\sigma_{\theta}^{XBC}\right)
\geq
\frac{\delta_H^2}{L_m}
\sum_{j<k}
(k-j)\frac{\|\sigma_{jk}^{BC}\|_1^2}{p_j+p_k}.
\end{equation}
Let
\begin{equation}
T=\sum_{j<k}\|\sigma_{jk}^{BC}\|_1.
\end{equation}
By weighted Cauchy--Schwarz inequality,
\begin{equation}
T^2
\leq
\left[
\sum_{j<k}
(k-j)\frac{\|\sigma_{jk}^{BC}\|_1^2}{p_j+p_k}
\right]
\left[
\sum_{j<k}\frac{p_j+p_k}{k-j}
\right].
\end{equation}
The second factor satisfies
\begin{equation}
\sum_{j<k}\frac{p_j+p_k}{k-j}
=
\sum_{j=1}^m
p_j
\left(
\sum_{\ell=1}^{j-1}\frac{1}{\ell}
+
\sum_{\ell=1}^{m-j}\frac{1}{\ell}
\right)
\leq L_m.
\end{equation}
It follows that
\begin{equation}
F_{\mathrm{LOCC}}^{(X: BC)}
\left(\sigma_{\theta}^{XBC}\right)
\geq
\frac{\delta_H^2}{L_m^2}T^2.
\end{equation}
By Lemma~\ref{lem:block-coherence-bound} and the Hermiticity of \(\sigma_0^{XBC}\),
\begin{equation}
T
=
\frac{1}{2}\sum_{j\neq k}\|\sigma_{jk}^{BC}\|_1
\geq
\frac{2^{H(X|E)_\omega}-1}{2}.
\end{equation}
Consequently,
\begin{equation}
F_{\mathrm{LOCC}}^{(X: BC)}
\left(\sigma_{\theta}^{XBC}\right)
\geq
\frac{\delta_H^2}{4L_m^2}
\left(2^{H(X|E)_\omega}-1\right)^2.
\end{equation}

We now consider an erasure channel with erasure probability
$\lambda=1-t$. The output state has two
orthogonal flag sectors:
\begin{equation}
\widetilde{\sigma}_\theta^{XB\widetilde C}
=
t\,\sigma_\theta^{XBC}
\oplus
(1-t)\,\sigma_\theta^{XB}.
\label{eq:erasure-direct-sum}
\end{equation}
Alice performs the all-pair POVM defined in
Eq.~\eqref{eq:all-pair-povm} and sends her  outcomes to
Bob. Bob first reads the erasure flag. In the transmitted sector, he
performs the corresponding conditional measurement on $BC$ as constructed in Lemma~\ref{lem:two-block-locc}. In the
erased sector, he maps all outcomes to a single failure symbol. This
protocol uses only one-way communication from Alice to Bob.

Let $q_y(\theta)$ denote the outcome probabilities of the
perfect-transmission protocol as in Eq.~\ref{eq:P_jksb}, where $y=(jk,s,b)$. The corresponding outcome
probabilities in the transmitted sector of the erasure channel
are
\begin{equation}
\widetilde q_y(\theta)
=
tq_y(\theta).
\end{equation}
Since $t$ is independent of $\theta$,
\begin{equation}
\dot{\widetilde q}_y(\theta)
=
t\dot q_y(\theta).
\end{equation}
The Fisher-information contribution of the transmitted sector is
therefore
\begin{equation}
\begin{aligned}
\sum_y
\frac{
\dot{\widetilde q}_y(\theta_0)^2
}{
\widetilde q_y(\theta_0)
}
&=
\sum_y
\frac{
t^2\dot q_y(\theta_0)^2
}{
tq_y(\theta_0)
}=
t
\sum_y
\frac{
\dot q_y(\theta_0)^2
}{
q_y(\theta_0)
}.
\end{aligned}
\end{equation}
The failure symbol has probability $1-t$, independent of $\theta$,
and hence contributes no Fisher information. It follows that
\begin{equation}
F_{\rm LOCC}^{(X: B\widetilde C)}
\left(
\widetilde{\sigma}_\theta^{XB\widetilde C}
\right)
\geq
t
F_{\rm LOCC}^{(X: BC)}
\left(
\sigma_{\theta}^{XBC}
\right)\geq
\frac{(1-\lambda)\delta_H^2}{4L_m^2}
\left(
2^{H(X|E)_\omega}-1
\right)^2.
\label{eq:erasure-postselection}
\end{equation}

To connect this metrological bound to an operational measure of privacy, we now relate it to the fixed-$X$ private information. Recall that
\begin{equation}
P_X(\sigma_0)
=
I(X;B)_\omega-I(X;E)_\omega
=
H(X|E)_\omega-H(X|B)_\omega.
\end{equation}
Because $\omega^{XB}$ is a cq state,
\begin{equation}
H(X|B)_\omega\ge0,
\end{equation}
and therefore
\begin{equation}
H(X|E)_\omega\ge P_X(\sigma_0).
\end{equation}
Under the assumption $P_X(\sigma_0)>0$, both
$2^{H(X|E)_\omega}-1$ and $2^{P_X(\sigma_0)}-1$ are
nonnegative. Hence
\begin{equation}
\left(2^{H(X|E)_\omega}-1\right)^2
\ge
\left(2^{P_X(\sigma_0)}-1\right)^2.
\end{equation}
Consequently,
\begin{equation}
F_{\mathrm{LOCC}}^{(X: B\widetilde C)}
\bigl(\widetilde{\sigma}_{\theta}^{XB\widetilde C}\bigr)
\ge
\frac{(1-\lambda)\delta_H^2}{4L_m^2}
\left(2^{P_X(\sigma_0)}-1\right)^2.
\end{equation}
This proves Theorem~\ref{thm:general-fixed-key-assisted}.

For example, for the $50\%$ erasure channel, $\lambda=1/2$, and thus
\begin{equation}
F_{\mathrm{LOCC}}^{(X: B\widetilde C)}
\bigl(\widetilde{\sigma}_{\theta}^{XB\widetilde C}\bigr)
\ge
\frac{\delta_H^2}{8L_m^2}
\left(2^{P_X(\sigma_0)}-1\right)^2.
\end{equation}

\subsection{Relation to the private-bit example}\label{SI_sec:relation_pbit}

In this subsection, we verify that the lower bound on the LOCC Fisher information
for the data-hiding example discussed in
Theorem~\ref{main_thm:hiding-recovery} follows as a special case of
Theorem~\ref{thm:general-fixed-key-assisted}. Consider
\begin{equation}
\gamma_N^{ABA'B'}(\theta)
:=
\frac12\,\psi_{N+}^{AB}(\theta)\otimes \rho_+^{A'B'}
+
\frac12\,\psi_{N-}^{AB}(\theta)\otimes \rho_-^{A'B'},
\end{equation}
where
\begin{equation}
\rho_+^{A'B'}=\frac{I+S}{d(d+1)},
\qquad
\rho_-^{A'B'}=\frac{I-S}{d(d-1)}.
\end{equation}
We apply Theorem~\ref{thm:general-fixed-key-assisted} by
taking \(X=A\), \(C=A'\), and regarding \(BB'\) as Bob's system \(B\).
For convenience, we identify
\(\ket{0}^{X}\equiv\ket{N}^{A}\) and
\(\ket{1}^{X}\equiv\ket{0}^{A}\). It follows that
\(p_0=p_1=1/2\), \(m=2\), \(\delta_H=N\), and \(L_2=1\).

Note that \(\gamma_N(0)\) is a private bit. To see this explicitly, recall that the symmetric and antisymmetric Werner states are supported on the \(+1\) and \(-1\) eigenspaces of the swap operator \(S\),  and therefore satisfy \(S\rho_\pm=\pm\rho_\pm\). Setting
\(\sigma=(\rho_++\rho_-)/2\), \(U_0=I\), and \(U_1=S\) in the
standard form of private states reviewed in the Sec.~\ref{SI_sec:pbit} of the Supplemental Material gives
\begin{align}
U_0\sigma U_0^\dagger
&=
U_1\sigma U_1^\dagger
=
\frac{\rho_++\rho_-}{2},\\
U_0\sigma U_1^\dagger
&=
U_1\sigma U_0^\dagger
=
\frac{\rho_+-\rho_-}{2}.
\end{align}
Substituting these relations into the standard form of private state yields
\begin{align}
\gamma^{ABA'B'}
={}&
\frac{1}{4}
\left(
\ket{N0}\bra{N0}^{AB}
+
\ket{0N}\bra{0N}^{AB}
\right)
\otimes
\left(\rho_++\rho_-\right)^{A'B'}
\nonumber\\
&+
\frac{1}{4}
\left(
\ket{N0}\bra{0N}^{AB}
+
\ket{0N}\bra{N0}^{AB}
\right)
\otimes
\left(\rho_+-\rho_-\right)^{A'B'}
\nonumber\\
={}&
\frac{1}{2}\psi_{N+}^{AB}(0)\otimes\rho_+^{A'B'}
+
\frac{1}{2}\psi_{N-}^{AB}(0)\otimes\rho_-^{A'B'}
=
\gamma_N(0).
\end{align}
Thus, \(\gamma_N(0)\) has exactly the private-state form with a one-bit key.
Although $\gamma_N(0)$ is a private bit, the standard definition of private states refers to the secret key obtained by measuring both key systems $A$ and $B$ in the computational basis. By contrast, Theorem~\ref{thm:general-fixed-key-assisted} requires positive private information $P_X$ between the measured system $X=A$ and Bob's systems $BB'$. We show explicitly below that $P_X(\gamma_N(0))=1$.

Under the identification of \(X\) with Alice's key system \(A\), measuring \(X\) produces the two outcomes with equal probabilities. The defining security property of a private state implies that the conditional state of any purifying environment \(E\) is the same for both outcomes. Consequently, the resulting \(XE\) state has the product form
\begin{equation}
\omega^{XE}
=
\frac{I^X}{2}\otimes\omega^E,
\end{equation}
and therefore \(I(X;E)_\omega=0\).

At the same time, Alice's two key states \(\ket{N}^{A}\) and \(\ket{0}^{A}\) are paired with the mutually orthogonal states \(\ket{0}^{B}\) and \(\ket{N}^{B}\), respectively. Bob can therefore determine \(X\) perfectly by measuring his key system \(B\). Hence,
\begin{equation}
H(X|BB')_\omega=0,
\end{equation}
and, since \(H(X)_\omega=1\),
\begin{equation}
I(X;BB')_\omega=H(X)_\omega-H(X|BB')_\omega=1.
\end{equation}
Therefore,
\(P_X(\gamma_N(0))=1\).

Substituting these values into
Theorem~\ref{thm:general-fixed-key-assisted} gives an assisted LOCC
Fisher-information lower bound of \((1-\lambda)N^2/4\). This bound is independent
of the shield dimension \(d\) and therefore reproduces the lower bound established in Theorem~\ref{main_thm:hiding-recovery}. However, the lower bound from Theorem~\ref{thm:general-fixed-key-assisted} is
not tight for this example, the direct recovery calculation in Theorem~\ref{main_thm:hiding-recovery} gives
\((1-\lambda)N^2\), which is larger by a factor of four because it
exploits the specific binary structure of the Werner hiding pair. 

We emphasize that Theorem~\ref{thm:general-fixed-key-assisted}
applies to any positive value of the private information, including
noninteger values, whereas the private-state-based data-hiding example
in Theorem~\ref{main_thm:hiding-recovery} has integer-valued private
information.
The integer-valued private information in the data-hiding example
follows from the standard form of private-state used here, whose key systems encode an integer number of private bits.
Note also that Theorem~\ref{thm:general-fixed-key-assisted} does not provide an upper bound on the unassisted LOCC Fisher
information, it only provides a lower bound on the assisted LOCC
Fisher information and thus guarantees a certain level of assisted
sensing performance.
Nevertheless, the data-hiding example in
Theorem~\ref{main_thm:hiding-recovery} demonstrates that this assisted
lower bound is nontrivial, since the LOCC Fisher information is
limited in the absence of the assisting channel in some cases.

We can also evaluate the logarithmic negativity of $\gamma_N(0)$,
\begin{equation}
E_N(\gamma_N(0))=\log\|\gamma_N(0)^{T_{BB'}}\|_1=\log(1+2/d),
\end{equation}
which provides an upper bound on its distillable entanglement \cite{vidal2002computable,plenio2005logarithmic}.
Therefore, the distillable entanglement is $O(1/d)$ as $d\to\infty$.

\subsection{The case without the unencoded system \texorpdfstring{$C$}{C} }

Suppose that $C$ is absent, so that Alice holds only $X$ and Bob holds $B$. In this case, no assisting transmission is required. 
 $P_X(\sigma_0)$ has a particularly simple interpretation in this case. Let
\begin{equation}
|\Omega\rangle^{XBE}
=
\sum_{x\in X_+}
\sqrt{p_x}\,|x\rangle^X|\varphi_x\rangle^{BE},
\end{equation}
be a purification of $\sigma_0^{XB}$. Because no system $C$ is traced out, every conditional state $|\varphi_x\rangle^{BE}$ is pure. Hence
\begin{equation}
\begin{aligned}
P_X(\sigma_0)
&=
I(X;B)_{\omega_{\sigma_0}}
-
I(X;E)_{\omega_{\sigma_0}}
=
S(B)_{\omega_{\sigma_0}}
-
S(E)_{\omega_{\sigma_0}}
-
\sum_{x\in X_+}p_x
\left[
S(B)_{\varphi_x}
-
S(E)_{\varphi_x}
\right]
\\
&=
S(B)_{\sigma_0}
-
S(E)_{\Omega}
=
S(B)_{\sigma_0}
-
S(XB)_{\sigma_0}
=
I(X\rangle B)_{\sigma_0}.
\end{aligned}
\end{equation}
Thus, when $C$ is absent, the fixed-$X$ private information coincides with the coherent information of $\sigma_0^{XB}$.

Therefore, while the LOCC Fisher-information lower bound no longer requires an assisting channel when $C$ is absent, $P_X(\sigma_0)$ coincides with the coherent information and therefore certifies one-way distillable entanglement. 
When $C$ is nontrivial, transmitting $C$ is sufficient to make the joint $BC$ coherences certified by the private information accessible to Bob and thereby guarantee the LOCC Fisher information lower bound. By contrast, in the absence of $C$, the same type of lower bound is already accessible by one-way LOCC without an assisting quantum channel.

To be rigorous, we independently prove the LOCC Fisher information lower bound for the case when $C$ is absent.
Applying Lemma \ref{lemm:all-pair-locc} with \(R=B\), and using
\begin{equation}
|h_j-h_k|\geq\delta_H(k-j),
\end{equation}
gives
\begin{equation}
F_{\mathrm{LOCC}}^{(X: B)}
\left(\sigma_{\theta}^{XB}\right)
\big|_{\theta=\theta_0}
\geq
\frac{\delta_H^2}{L_m}
\sum_{j<k}
(k-j)\frac{\|\sigma_{jk}^{B}\|_1^2}{p_j+p_k}.
\end{equation}
Using the same weighted Cauchy--Schwarz argument as in the proof of Theorem~\ref{thm:general-fixed-key-assisted},
\begin{equation}
F_{\mathrm{LOCC}}^{(X: B)}
\left(\sigma_{\theta}^{XB}\right)
\big|_{\theta=\theta_0}
\geq
\frac{\delta_H^2}{L_m^2}
\left(
\sum_{j<k}\|\sigma_{jk}^{B}\|_1
\right)^2.
\end{equation}
Lemma~\ref{lem:block-coherence-bound} and the Hermiticity of \(\sigma_0^{XB}\) give
\begin{equation}
\sum_{j<k}\|\sigma_{jk}^{B}\|_1
\geq
\frac{2^{H(X|E)_{\omega_{\sigma_0}}}-1}{2}.
\end{equation}
Using \(H(X|E)_{\omega_{\sigma_0}}\geq P_X(\sigma_0)\), we therefore obtain
\begin{equation}
F_{\mathrm{LOCC}}^{(X: B)}
\left(\sigma_{\theta}^{XB}\right)
\big|_{\theta=\theta_0}
\geq
\frac{\delta_H^2}{4L_m^2}
\left(2^{P_X(\sigma_0)}-1\right)^2.
\end{equation}
The right-hand side is strictly positive whenever $P_X(\sigma_0)>0$ and $\delta_H>0$.

\subsection{A coherent noisy-copy family with growing encoding-basis classical correlation}

In this subsection, we construct a state family whose encoding-basis classical mutual information grows with dimension, while its global QFI remains finite and its assisted LOCC Fisher information vanishes. Let
\begin{equation}
    \dim X=\dim B=\dim C=n,
\end{equation}
we consider the state
\begin{equation}
\sigma_0^{XBC}=\beta\sum_b\ket{b,v_b}\bra{b,v_b}+a\sum_{b\neq b'}\ket{b,v_{b'}}\bra{b,v_{b'}}+ta\sum_{b\neq b'}\ket{b,v_{b'}}\bra{b',v_b}.
\end{equation}
where $b\in\{0,\ldots,n-1\}$, $\beta=\frac{u}{n}$, $a=\frac{1-u}{n(n-1)}$, $0<u,t<1$,
\begin{equation}
\begin{gathered}
\ket{b,v_b}^{XBC}=\ket{b}^X\ket{v_b}^{BC},\qquad\ket{v_b}^{BC}=\ket{b}^B\ket{\chi_b}^{C},\\
\ket{\chi_b}^{C}=\sum_{k=0}^{n-1}\sqrt{p_k^{(n)}}\,e^{2\pi i bk/n}\ket{k}^{C},     \qquad
    p_k^{(n)}>0,
    \qquad
    \sum_{k=0}^{n-1}p_k^{(n)}=1 .
\end{gathered}
\end{equation}
The coefficient matrix of
$\{\ket{\chi_b}\}_b$ is the product of the discrete Fourier matrix
and the invertible diagonal matrix
$\operatorname{diag}(\sqrt{p_0^{(n)}},\ldots,\sqrt{p_{n-1}^{(n)}})$.
Hence the states $\{\ket{\chi_b}\}_b$ are linearly independent and
span $C$. They are orthonormal only for the uniform distribution.
Irrespective of their overlaps on $C$,
$\langle v_b\vert v_{b'}\rangle=\delta_{bb'}$ because of the orthogonal $B$ components.

Let
\begin{equation}
D:=\sum_b\ket{b,v_b}\bra{b,v_b},\qquad S:=\sum_{b,b'}\ket{b,v_{b'}}\bra{b',v_b},\qquad I_{S}:=I^X\otimes\Pi^{BC}, \qquad \Pi^{BC}:=\sum_b\ket{v_b}\bra{v_b}^{BC}.
\end{equation}
The state can be rewritten as
\begin{equation}
\sigma_0=a(1-t)I_{ S}+at(I_{\mathcal S}+S)+\left[\beta-a(1+t)\right]D.
\end{equation}
We impose the parameter condition
\begin{equation}\label{SI_eq:beta_assumption}
\beta\geq a(1+t)
\qquad\Longleftrightarrow\qquad
u\geq\frac{1+t}{n+t}.    
\end{equation}
This ensures that the coefficient $\beta-a(1+t)$ is nonnegative. As shown below, this condition makes the background contribution in Eq.~\ref{SI_eq:p_alpha} nonnegative, allowing us to upper-bound the Fisher information obtainable from product measurements. The state is also separable across $X:BC$ under this condition.


The parameter $\theta$ is encoded according to
\begin{equation}
\sigma_\theta=(U_\theta^X\otimes I^{BC})\sigma_0(U_\theta^{X\dagger}\otimes I^{BC}),\qquad U_\theta^X=e^{i\theta H_X},\qquad H_X=\sum_bh_b\ket{b}\bra{b}.
\end{equation}
where the coefficients $h_{b}$ satisfy $\tr H_X=0$ and may be otherwise chosen arbitrarily.

\textit{Classical and private correlations.} We now calculate the classical correlation. After dephasing \(X\), we obtain
\begin{equation}
\omega^{XBC}=\Delta_X(\sigma_0^{XBC})=\frac{1}{n}\sum_b\ket{b}\bra{b}^X\otimes\gamma_b^{BC}.
\end{equation}
The label \(b\) is uniformly distributed, with \(p_b=1/n\), and the corresponding conditional state on \(BC\) is
\begin{equation}
\gamma_b^{BC}=u\ket{v_b}\bra{v_b}^{BC}+\frac{1-u}{n-1}\sum_{b'\neq b}\ket{v_{b'}}\bra{v_{b'}}^{BC}.
\end{equation}
Tracing out \(C\) gives
\begin{equation}
\gamma_b^B=u(\ket{b}\bra{b})^B+\frac{1-u}{n-1}\sum_{b'\neq b}(\ket{b'}\bra{b'})^B.
\end{equation}
The mutual information of this state is therefore
\begin{equation}
I(X;B)_\omega=\log_2n-h_2(u)-(1-u)\log_2(n-1).
\end{equation}
For fixed \(u\), as \(n\to\infty\),

\begin{equation}
I(X;B)_\omega=u\log_2n-h_2(u)+o(1),
\end{equation}
so the encoding basis classical mutual information can be arbitrarily large.

We now calculate the private information. Let

\begin{equation}
\ket{\Omega}^{XBCE}=\frac{1}{\sqrt{n}}\sum_b\ket{b}^{X}\ket{\Omega_b}^{BCE},\qquad \operatorname{Tr}_E\left[(\ket{\Omega_b}\bra{\Omega_b})^{BCE}\right]=\gamma_b^{BC},
\end{equation}
be a purification of \(\sigma_0\), and let \(\omega^{XBE}\) be obtained by dephasing \(X\) and tracing out \(C\). The relevant spectra give
\begin{equation}
S(\gamma_b^{BC})=h_2(u)+(1-u)\log_2(n-1),
\end{equation}
and
\begin{equation}
S(E)_\omega=S(\sigma_0)=h_2(u)+u\log_2n+(1-u)\left[\log_2\frac{n(n-1)}{2}+h_2\left(\frac{1+t}{2}\right)\right].
\end{equation}
Because the conditional state \(\ket{\Omega_b}^{BCE}\) is pure, \(S(E)_{\ket{\Omega_b}}=S(\gamma_b^{BC})\), and therefore
\begin{equation}
H(X|E)_\omega=\log_2n+S(\gamma_b^{BC})-S(\sigma_0)=(1-u)L(t),\qquad L(t):=1-h_2\left(\frac{1+t}{2}\right).
\end{equation}
On the other hand,
\begin{equation}
H(X|B)_\omega=h_2(u)+(1-u)\log_2(n-1).
\end{equation}
Consequently, we can get
\begin{equation}
P_X(\sigma_0):=I(X;B)_\omega-I(X;E)_\omega=(1-u)L(t)-h_2(u)-(1-u)\log_2(n-1).
\end{equation}
For \(n\geq3\), we have \(\log_2(n-1)\geq1>L(t)\), and hence
\begin{equation}
P_X(\sigma_0)=(1-u)\left[L(t)-\log_2(n-1)\right]-h_2(u)<0,\qquad n\geq3.
\end{equation}

\textit{Global QFI.}
For every unordered pair \(b<b'\), define
\begin{equation}
\ket{\psi_{bb'}^\pm}=\frac{\ket{b,v_{b'}}\pm\ket{b',v_b}}{\sqrt{2}}.
\end{equation}
These states have eigenvalues \(a(1\pm t)\), and
\begin{equation}
\bra{\psi_{bb'}^+}(H_X\otimes I^{BC})\ket{\psi_{bb'}^-}=\frac{h_b-h_{b'}}{2}.
\end{equation}
The spectral formula for the QFI therefore gives \cite{paris2009quantum,BraunsteinCaves1994}
\begin{equation}
F_Q(\sigma_\theta)=2at^2\sum_{b<b'}(h_b-h_{b'})^2=\frac{2t^2(1-u)}{n-1}\operatorname{Tr}H_X^2,
\end{equation}
where we used \(\sum_{b<b'}(h_b-h_{b'})^2=n\operatorname{Tr}H_X^2\) and \(\operatorname{Tr}H_X=0\).

\textit{Upper bound on the unassisted LOCC Fisher information.} For the unassisted setting, Alice holds $XC$ and Bob holds $B$. Define
\begin{align}
S_\theta
=
(U_\theta^X\otimes I^{BC})S
(U_\theta^{X\dagger}\otimes I^{BC}),
\qquad
\Delta_H=\max_{b,b'}|h_b-h_{b'}|.
\end{align}
The decomposition above and $\beta\geq a(1+t)$ imply
\begin{align}
\sigma_\theta\geq a(1-t)I_S,
\qquad
\dot{\sigma}_\theta=at\dot S_\theta.
\label{eq:unassisted-basic-bound}
\end{align}
Moreover, $\dot S_\theta$ is supported on $I_S$ and has eigenvalues
$\pm|h_b-h_{b'}|$ on the subspace spanned by
$\{|b,v_{b'}\rangle,|b',v_b\rangle\}$. Therefore,
\begin{align}
\|\dot S_\theta\|_\infty=\Delta_H.
\end{align}

Let $\{M_\alpha\}_\alpha$ be an arbitrary LOCC POVM across $XC:B$, and define
\begin{equation}
p_\alpha=\operatorname{Tr}(M_\alpha\sigma_\theta),
\qquad
s_\alpha=\operatorname{Tr}(M_\alpha I_S),
\qquad
d_\alpha=\operatorname{Tr}(M_\alpha\dot S_\theta).
\end{equation}
We have
\begin{align}
p_\alpha\geq a(1-t)s_\alpha,
\qquad
\dot p_\alpha=atd_\alpha,
\qquad
|d_\alpha|\leq\Delta_Hs_\alpha.
\end{align}
Consequently,
\begin{align}
\frac{\dot p_\alpha^2}{p_\alpha}
\leq
\frac{at^2\Delta_H}{1-t}|d_\alpha|.
\label{eq:unassisted-outcome-bound}
\end{align}

Every LOCC effect is separable and hence PPT across $XC:B$. Thus,
$\{M_\alpha^{T_B}\}_\alpha$ is also a POVM. We note the following inequality for any Hermitian matrix $A$,
\begin{align}
\sum_\alpha
\left|
\operatorname{Tr}(E_\alpha A)
\right|
\leq
\|A\|_1.
\end{align}
This follows by decomposing $A$ into its positive and negative parts and using the positivity of each $E_\alpha$. Equality is attained by choosing the POVM elements $E_\alpha$ to be the projections onto the positive and negative spectral subspaces of $A$.
Thus, we obtain
\begin{align}
\sum_\alpha|d_\alpha|
&=
\sum_\alpha
\left|
\operatorname{Tr}
\left(M_\alpha^{T_B}\dot S_\theta^{T_B}\right)
\right|
\leq
\|\dot S_\theta^{T_B}\|_1.
\label{eq:unassisted-ppt-bound}
\end{align}

We now try to bound $\|\dot S_\theta^{T_B}\|_1$. Define
\begin{equation}
    \overline{\rho}^{C}
    =
    \frac{1}{n}\sum_{b=0}^{n-1}
    \ket{\chi_b}\!\bra{\chi_b}
    =
    \sum_{k=0}^{n-1}
    p_k^{(n)}\ket{k}\!\bra{k},
    \qquad
    R_n
    =
    \left(\Tr\sqrt{\overline{\rho}^{C}}\right)^2
    =
    \left(
        \sum_{k=0}^{n-1}\sqrt{p_k^{(n)}}
    \right)^2 .
\end{equation}
Let
\begin{equation}
    \ket{e_b}^{XB}=\ket{b}^{X}\ket{b}^{B},
    \qquad
    \ket{\xi_\theta}^{XBC}
    =
    \sum_{b=0}^{n-1}
    e^{-i\theta h_b}\ket{e_b}^{XB}\ket{\chi_b}^{C},
\end{equation}
and define $K=XB$. A direct calculation gives
\begin{equation}
    S_\theta^{T_B}
    =
    \left(
        \ket{\xi_\theta}\!\bra{\xi_\theta}
    \right)^{T_K},
\end{equation}
where $T_K$ denotes the partial transpose on  $K$ 
with respect to the basis $\{\ket{e_b}\}_b$.

The reduced operator of $\ket{\xi_\theta}\bra{\xi_\theta}$ on $C$ is $n\overline{\rho}^{C}$. Writing the Schmidt decomposition as $\ket{\xi_\theta}^{XBC}=\sum_j s_j\ket{a_j}^{K}\ket{c_j}^C$, we find that $S_\theta^{T_B}=\sum_{i,j}s_is_j\ket{a_j}\bra{a_i}^K\otimes\ket{c_i}\bra{c_j}^C$ has eigenvalues $s_i^2,\pm s_{i}s_j$, where the latter pair occurs for each $i<j$. Hence,
\begin{equation}
    \left\|S_\theta^{T_B}\right\|_1
    =\sum_i s_i^2+\sum_{i<j}2s_is_j=(\sum_i s_i)^2=
    \left(
        \Tr\sqrt{n\overline{\rho}^{C}}
    \right)^2
    =
    nR_n .
\end{equation}
Define
\[
    H_K:=\sum_{b=0}^{n-1}h_b\ket{e_b}\!\bra{e_b}.
\]
The encoded operator satisfies
\begin{equation}
    S_\theta^{T_B}
    =
    e^{i\theta H_K}
    S_0^{T_B}
    e^{-i\theta H_K}.
\end{equation}
Consequently,
\begin{equation}
\begin{aligned}
    \left\|\dot S_\theta^{T_B}\right\|_1
    &=
    \left\|
        i[H_K,S_\theta^{{T_B}}]
    \right\|_1 =      \left\|
        i[H_K-cI_K,S_\theta^{{T_B}}]
    \right\|_1                                          \\       
    &\leq
    2\min_{c\in\mathbb{R}}
    \left\|H_K-cI_K\right\|_\infty
    \left\|S_\theta^{T_B}\right\|_1                                  
    =
    \Delta_H nR_n .
\end{aligned}
\end{equation}
The second equality holds for every $c\in\mathbb{R}$ because the identity commutes with $S_\theta^{T_B}$. Writing $h_{\max}=\max_b h_b$ and $h_{\min}=\min_b h_b$, the minimum is attained at $c=(h_{\max}+h_{\min})/2$, for which $\|H_K-cI_K\|_\infty=(h_{\max}-h_{\min})/2=\Delta_H/2$.

Combining the above results, we obtain
\begin{equation}
\begin{aligned}
    F_{\mathrm{LOCC}}^{(XC:B)}(\sigma_\theta)
    &=
    \sum_\alpha\frac{\dot p_\alpha^2}{p_\alpha}      \leq
    \frac{at^2\Delta_H}{1-t}
    \sum_\alpha |d_\alpha|                     
   \leq
    \frac{at^2\Delta_H}{1-t}
    \left\|\dot S_\theta^{T_B}\right\|_1                              \\
    &\leq
    \frac{at^2\Delta_H^2nR_n}{1-t}             
    =
    \frac{t^2(1-u)\Delta_H^2R_n}
         {(1-t)(n-1)} .
\end{aligned}
\end{equation}
For the choice
\[
    h_b=\frac{b-(n-1)/2}{n-1},
\]
we have $\Delta_H=1$, and hence
\begin{equation}
    F_{\mathrm{LOCC}}^{(XC:B)}(\sigma_\theta)
    \leq
    \frac{t^2(1-u)R_n}{(1-t)(n-1)}
    =
    O\!\left(\frac{R_n}{n}\right).
\end{equation}
Therefore, the unassisted LOCC Fisher information is $O(1/n)$ whenever $R_n$ remains bounded independently of $n$.



Note that
\begin{equation}
R_n=2^{H_{1/2}(p^{(n)})},    
\end{equation}
where $H_{1/2}$ denotes the Rényi entropy of order $1/2$ and hence 
\begin{equation}
    1
    \leq
    R_n
    =
    \left(
        \sum_{k=0}^{n-1}\sqrt{p_k^{(n)}}
    \right)^2
    \leq n .
\end{equation}
Consequently,  $R_n$ can remain bounded independently of $n$ for proper choice of $\{p_k^{(n)}\}_k$. Any $p_k^{(n)}$ satisfying
$R_n=O(1)$ yields
\begin{equation}
F_{\mathrm{LOCC}}^{(XC:B)}=O(1/n).    
\end{equation}

We now consider a few examples of the choice of $\{p_k^{(n)}\}_k$. The uniform choice
\[
    p_k^{(n)}=\frac{1}{n}
\]
gives orthonormal tags and $R_n=n$. In this case, the present
unassisted upper bound is only $O(1)$ and does not establish a
vanishing LOCC Fisher information.

For the choice
\begin{equation}
    p_k^{(n)}
    =
    \frac{(1-\kappa)\kappa^k}{1-\kappa^n},
    \qquad
    0<\kappa<1.    
\end{equation}
We find
\begin{equation}
\begin{aligned}
    R_n
    &=
    \frac{1+\sqrt{\kappa}}{1-\sqrt{\kappa}}
    \frac{1-\kappa^{n/2}}{1+\kappa^{n/2}}                            
    &\leq
    \frac{1+\sqrt{\kappa}}{1-\sqrt{\kappa}}
    =
    O(1).
\end{aligned}
\end{equation}

A simpler choice is
\[
\begin{aligned}
    p_0^{(n)}
    &=
    (1-q)\left(1-\frac{1}{n}\right),                                \\
    p_1^{(n)}
    &=
    q\left(1-\frac{1}{n}\right),                                    \\
    p_k^{(n)}
    &=
    \frac{1}{n(n-2)},
    \qquad
    2\leq k\leq n-1,
\end{aligned}
\qquad
0<q<1.
\]
For this distribution,
\[
\begin{aligned}
    R_n
    &=
    \left[
        \sqrt{(1-q)\left(1-\frac{1}{n}\right)}
        +
        \sqrt{q\left(1-\frac{1}{n}\right)}
        +
        \sqrt{\frac{n-2}{n}}
    \right]^2                                                       \\
    &\longrightarrow
    \left(1+\sqrt{q}+\sqrt{1-q}\right)^2
    =
    O(1).
\end{aligned}
\]

\textit{Upper bound on the assisted LOCC Fisher information.} Every LOCC measurement across $X:BC$ is separable, and every separable POVM can be refined into rank-one product effects. Since classical Fisher information cannot decrease under outcome refinement, it suffices to consider rank-one product effects across \(X:BC\) of the form
\begin{equation}
M_\alpha=w_\alpha\ket{p_\alpha}\bra{p_\alpha}^X\otimes\ket{q_\alpha}\bra{q_\alpha}^{BC}.
\end{equation}
Define
\begin{equation}
\begin{aligned}
    &s_\alpha:=\bra{q_\alpha}\Pi^{BC}\ket{q_\alpha},\qquad z_\alpha:=\bra{p_\alpha}U_\theta^XW^{BC\to X}\ket{q_\alpha},\\
    & y_\alpha:=\bra{p_\alpha}H_XU_\theta^XW^{BC\to X}\ket{q_\alpha},\qquad r_\alpha:=\sum_b\left|\bra{p_\alpha}\ket{b}\right|^2\left|\bra{q_\alpha}\ket{v_b}\right|^2.
\end{aligned}
\end{equation}
where $W^{BC\to X}=\sum_b\ket{b}^X\bra{v_b}^{BC}$.
The corresponding outcome probability is
\begin{equation}\label{SI_eq:p_alpha}
p_\alpha(\theta)=w_\alpha\left[as_\alpha+ta|z_\alpha|^2+\left(\beta-a-ta\right)r_\alpha\right].
\end{equation}
The Cauchy--Schwarz inequality gives \(\lvert z_\alpha\rvert^2\leq s_\alpha\leq1\). Moreover,
\begin{equation}
|z_\alpha|^2\leq s_\alpha\leq1,\qquad \left|\frac{d}{d\theta}|z_\alpha|^2\right|^2\leq4|z_\alpha|^2|y_\alpha|^2.
\end{equation}
Using the assumption in Eq.~\ref{SI_eq:beta_assumption}, \(\beta-a-ta\geq0\) and \(\lvert z_\alpha\rvert^2\leq s_\alpha\), we obtain
\begin{equation}
\frac{\dot p_\alpha(\theta)^2}{p_\alpha(\theta)}\leq\frac{4w_\alpha t^2a}{1+t}|y_\alpha|^2.
\end{equation}
Completeness of the POVM, together with \(WW^\dagger=I^X\), implies
\begin{equation}
\sum_\alpha w_\alpha\left|\bra{p_\alpha}H_XU_\theta^XW^{BC\to X}\ket{q_\alpha}\right|^2=\operatorname{Tr}H_X^2.
\end{equation}
It follows that
\begin{equation}
F_{\mathrm{SEP}}^{(X:BC)}(\sigma_\theta)\leq\frac{4t^2(1-u)}{(1+t)n(n-1)}\operatorname{Tr}H_X^2=\frac{2}{n(1+t)}F_Q(\sigma_\theta).
\end{equation}
Let \(\mathcal{N}^{C\to C'}\) be any prescribed parameter-independent channel acting directly on \(C\). Then
\begin{equation}
F_{\mathrm{LOCC}}^{(X:BC')}\left[\left(\operatorname{id}^{XB}\otimes\mathcal{N}^{C\to C'}\right)(\sigma_\theta)\right]\leq F_{\mathrm{SEP}}^{(X:BC)}(\sigma_\theta)\leq\frac{2}{n(1+t)}F_Q(\sigma_\theta).
\end{equation}
We have thus shown that even arbitrarily large encoding-basis classical mutual information does not guarantee a nonvanishing assisted LOCC Fisher information.

As an example, we choose
\begin{equation}
h_b=\frac{b-(n-1)/2}{n-1}.
\end{equation}
Thus,
\begin{equation}
\operatorname{Tr}H_X=0,\qquad \operatorname{Tr}H_X^2=\frac{n(n+1)}{12(n-1)}.
\end{equation}
Consequently,
\begin{equation}\begin{aligned}
&F_Q(\sigma_\theta)=\frac{t^2(1-u)n(n+1)}{6(n-1)^2}\longrightarrow\frac{t^2(1-u)}{6}=\Theta(1),\\
&F_{\mathrm{LOCC}}^{(X:BC')}\left[\left(\operatorname{id}^{XB}\otimes\mathcal{N}^{C\to C'}\right)(\sigma_\theta)\right]\leq\frac{t^2(1-u)(n+1)}{3(1+t)(n-1)^2}=O(1/n).    
\end{aligned}
\end{equation}

The advantage of the global measurement in this example arises from at least two distinct mechanisms.

First, the phase information is encoded in the coherences \(\ket{b,v_{b'}}\bra{b',v_b}\), for each of the \(n(n-1)/2\) unordered pairs \(b<b'\). A global measurement can resolve all these coherences in a single measurement setting because each pair spans an orthogonal two-dimensional subspace. In particular, one can measure in the Bell-like basis \(\ket{\Psi_{bb'}^\pm}=(\ket{b,v_{b'}}\pm\ket{b',v_b})/\sqrt{2}\). States associated with distinct pairs satisfy \(\bra{\Psi_{bb'}^{s}}\ket{\Psi_{\tilde b\tilde b'}^{s'}}=\delta_{b\tilde b}\delta_{b'\tilde b'}\delta_{ss'}\) for \(b<b'\) and \(\tilde b<\tilde b'\), while the two states within each pair are also orthogonal. The Fisher-information contributions from all pairs can therefore be collected simultaneously. By contrast, an LOCC measurement accesses the coherence of a given pair through correlations between local superposition measurements. For example, a pairwise readout can be constructed using product states such as \(\ket{\phi_{bb',\pm}}^X\otimes\ket{\varphi_{bb',\pm}}^{BC}\), where \(\ket{\phi_{bb',\pm}}^X=(\ket{b}^X\pm\ket{b'}^X)/\sqrt{2}\) and \(\ket{\varphi_{bb',\pm}}^{BC}=(\ket{v_b}^{BC}\pm\ket{v_{b'}}^{BC})/\sqrt{2}\). The crucial difference is that local superposition states associated with different pairs are generally nonorthogonal whenever the pairs share an index. Consequently, these pairwise readouts cannot all be realized as independent outcomes of a single local measurement basis. A pair-resolving local setting can cover only on the order of \(n\) disjoint pairs, whereas there are on the order of \(n^2\) pairs in total, so covering all pairs requires on the order of \(n\) incompatible settings. This incompatibility provides an intuitive explanation for the factor-of-order-\(n\) loss in the LOCC Fisher information relative to the global QFI.

Second, when we perform the global measurement in the Bell-like basis \(\ket{\Psi_{bb'}^\pm}=(\ket{b,v_{b'}}\pm\ket{b',v_b})/\sqrt{2}\), the first term \(\beta\sum_b\ket{b,v_b}\bra{b,v_b}\) does not contribute to the outcome probabilities. In contrast, for an LOCC measurement with product vectors such as \(\ket{\phi_{bb',\pm}}^X\otimes\ket{\varphi_{bb',\pm}}^{BC}\), the first term \(\beta\sum_b\ket{b,v_b}\bra{b,v_b}\) enters the outcome probabilities as a noise background. Since \(\beta/a=\Theta(n)\), this background reduces the Fisher information of the LOCC measurement by a factor of \(a/(a+\beta)=\Theta(1/n)\). Thus, this background-noise effect provides a second reason why global measurements can exhibit an advantage.

We now give an intuitive interpretation of the construction of \(\sigma_0^{XBC}\). The phase information is carried by the exchange coherences \(ta\sum_{b\neq b'}\ket{b,v_{b'}}\bra{b',v_b}\), while the diagonal part \(a\sum_{b\neq b'}\ket{b,v_{b'}}\bra{b,v_{b'}}\) is required for positivity.   We  add the noise term \(\beta\sum_b\ket{b,v_b}\bra{b,v_b}\) and choose \(\beta\geq a(1+t)\), which makes the background contribution nonnegative in Eq.~\ref{SI_eq:p_alpha}. Importantly, this term is not uncorrelated noise, but a phase-insensitive component that is perfectly classically correlated across \(X:BC\). Moreover, \(H(X|B)_\omega=h_2(u)+(1-u)\log_2(n-1)\) grows with \(n\), whereas \(H(X|E)_\omega=(1-u)L(t)\) remains bounded, ensuring that \(P_X(\sigma_0)<0\) for \(n\geq3\). 
The resulting state therefore has large encoding-basis classical mutual information and finite global QFI, has negative fixed-\(X\) private information, and, for choices satisfying \(R_n=O(1)\), both its unassisted and assisted LOCC Fisher informations vanish as \(n\to\infty\).

\end{document}